\documentclass[11pt]{article}
\usepackage{natbib,amssymb,amsmath,amsthm,fullpage,hyperref,graphicx}

\newcommand{\Cov}[1]{{\text{Cov}}[ \ensuremath{ #1 } ]  }

\newtheorem{lemma}{Lemma}

\title{
Lasso, fractional norm and structured sparse estimation 
using
a 
Hadamard product parametrization}
\author{Peter D. Hoff  \\
Department of Statistical Science \\
Duke University}
\date{\today}

\begin{document}

\maketitle

\begin{abstract}
Using a multiplicative reparametrization, I show that 
a subclass of $L_q$ penalties with $q\leq 1$ can be expressed
as sums of $L_2$ penalties. 
It follows that the lasso 
and other norm-penalized regression estimates 
may be obtained using a very simple and intuitive 
alternating ridge regression algorithm. 
As compared to a similarly intuitive EM algorithm 
for $L_q$ optimization, the proposed algorithm 
avoids some numerical instability issues and is 
also competitive in terms of speed. 
Furthermore, the proposed  algorithm  
can be extended to accommodate sparse high-dimensional scenarios, 
generalized linear models, and can be used to create  
structured sparsity via penalties derived from 
covariance models for the parameters. 
Such model-based penalties 
may be useful for sparse estimation of  spatially or temporally 
structured parameters.  

\smallskip

\noindent {\it Keywords:}
cyclic coordinate descent, 
generalized linear model,
linear regression, 
optimization, 
ridge regression, 
sparsity, 
spatial autocorrelation. 
\end{abstract}

\section{Introduction}
Consider estimation for the normal linear regression model
$y \sim N_n( X \beta, \sigma^2 I)$, where 
$X \in \mathbb R^{n\times p}$ is a matrix 
of predictor variables and $\beta\in \mathbb R^p$ 
is a vector of regression coefficients to be estimated.  
A least squares estimate is a minimizer of the 
residual sum of squares $|| y- X\beta||^2.$
A popular alternative estimate is the lasso estimate
\citep{tibshirani_1996}, which
minimizes 
$|| y- X\beta||^2 + \lambda ||\beta||_1$,
a penalized residual sum of squares that 
balances fit to the data against the possibility that 
some or many of the elements of $\beta$ are small or zero. 
Indeed, minimizers of this penalized sum of squares  may have elements 
that are exactly zero.   

There exists a large variety of optimization algorithms for finding 
lasso estimates
(see \citet{schmidt_fung_rosales_2007} for a review). 
However, the details of many of these algorithms are 
somewhat opaque to data analysts who are not well-versed in the 
theory of optimization. 
One exception is the 
local
quadratic approximation (LQA) algorithm 
of \citet{fan_li_2001}, which proceeds by 
iteratively computing a series of ridge regressions. 
\citet{fan_li_2001} also suggested using 
LQA for non-convex $L_q$ penalization when $q<1$,  
and this technique was used by 
\citet{kaban_durrant_2008} and \citet{kaban_2013} 
in their studies of non-convex $L_q$-penalized 
logistic regression. 
However, 
LQA can be numerically unstable for some combinations of 
models and penalties. 
To remedy this, \citet{hunter_li_2005} 
suggested optimizing a surrogate ``perturbed''  
objective function. This perturbation 
must be user-specified, and its value can affect 
the parameter estimate. 
As an alternative to using local quadratic approximations, 
\citet{zou_li_2008} suggest $L_q$-penalized optimization using local 
linear approximations (LLA). While this approach avoids the instability 
of LQA, the algorithm is implemented by iteratively 
solving a series of $L_1$ penalization problems for which 
an optimization algorithm must be chosen as well. 

This article develops a simple alternative technique for 
obtaining $L_q$-penalized regression estimates for many values of 
$q\leq 1$. The technique is based on a 
non-identifiable 
Hadamard product 
parametrization (HPP) of $\beta$ as $\beta = u\circ v$, where 
``$\circ$'' denotes the Hadamard (element-wise) product 
of the vectors $u$ and $v$. As shown in the next section, 
if $\hat u$ and $\hat v$ are
optimal $L_2$-penalized  values of 
$u$ and $v$, 
then $\hat \beta = \hat u\circ \hat v$ 
is an optimal $L_1$-penalized value of $\beta$. 
An alternating ridge regression algorithm for 
obtaining $\hat u\circ \hat v$ is easy to understand and 
implement, and is competitive with LQA 
in terms of speed. Furthermore, a modified version of HPP 
can be adapted to 
provide fast convergence in sparse, high-dimensional scenarios. 
In Section 3 we consider extensions of this algorithm 
for non-convex $L_q$-penalized regression with $q\leq 1$. 
As in the $L_1$ case, 
$L_q$-penalized linear regression estimates may be found using 
alternating ridge regression, 
whereas estimates in generalized linear models can be obtained 
with a modified version of an iteratively reweighted least squares
algorithm. 
In Section 4 we show how the HPP
can facilitate structured sparsity in parameter estimates: 
The $L_2$ penalty on the vectors $u$ and $v$ can be interpreted 
as independent Gaussian prior distributions on the elements of 
$u$ and $v$. 
If instead we choose a penalty that mimics 
a dependent Gaussian prior, then we can achieve structured 
sparsity among the elements of $\hat \beta= \hat u\circ\hat v$. This 
technique is illustrated with an analysis of brain imaging data, 
for which a spatially structured HPP penalty 
is able to identify spatially contiguous regions of differential brain 
activity. A discussion 
follows in Section 5.

\section{$L_1$ optimization using the HPP and ridge regression}  
\subsection{The Hadamard product parametrization} 
The lasso  or $L_1$-penalized regression estimate $\hat \beta$ 
of $\beta$ for the 
model 
$y\sim N_p( X\beta,\sigma^2 I)$ is the minimizer 
of $|| y - X\beta||^2 + \lambda||\beta||_1$, 
or equivalently of the objective function 
\begin{equation}
f(\beta) =  \beta^\top Q \beta - 2 \beta^\top l  +  \lambda ||\beta||_1
\label{eqn:l1_obj}, 
\end{equation}
where $Q= X^\top X$ and $l=  X^\top y$. 
Now reparametrize the model so that 
$\beta= u \circ v$, where 
``$\circ$'' is the Hadamard (element-wise) product. 
We refer to this parametrization as the Hadamard product parametrization
(HPP).
Estimation of 
$u$ and $v$ using $L_2$ penalties corresponds to  the following objective 
function:
\begin{equation}
g(u,v) = ( u\circ v)^\top Q (u \circ v) - 2 (u\circ v)^\top l + 
         \lambda (  u^\top u  + v^\top v  )/2. 
\label{eqn:hpp_obj}
\end{equation}
Consideration of this parametrization 
and objective function may seem odd, as
the values of $u$ and $v$ beyond their element-wise product $\beta$ 
are not identifiable from the data.
However, $g$ is differentiable and biconvex, and its local minimizers 
can be found using a very simple 
alternating ridge regression algorithm.  
Furthermore, there is a correspondence between minimizers of 
$g$ and minimizers of $f$, which we state more generally as follows:
\begin{lemma} 
Let $f(\beta) = h(\beta) + \lambda || \beta||_1 $ and 
$g(u,v) = h( u\circ v) + \lambda( u^\top u + v^\top v)/2$. 
Then 
\begin{enumerate}
\item $\inf_\beta  f(\beta) =  \inf_{ u,v} g(u,v)$; 
\item if $(\hat u, \hat v)$ is a local minimum of $g$, then 
      $\hat\beta =\hat u \circ \hat v$ is a local minimum of $f$. 
\end{enumerate} 
\label{lemma:lasso_equiv} 
\end{lemma}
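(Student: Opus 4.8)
The plan is to reduce everything to the coordinatewise inequality $a^2+b^2\ge 2|ab|$, with equality iff $|a|=|b|$. Writing $\beta=u\circ v$, this gives $\tfrac{\lambda}{2}(u^\top u+v^\top v)=\tfrac{\lambda}{2}\sum_j(u_j^2+v_j^2)\ge\lambda\sum_j|u_jv_j|=\lambda\|\beta\|_1$, hence $g(u,v)\ge h(u\circ v)+\lambda\|u\circ v\|_1=f(u\circ v)$ for every $(u,v)$, so $\inf_{u,v}g\ge\inf_\beta f$. Conversely, given any $\beta$, put $u_j=\sqrt{|\beta_j|}$ and $v_j=\operatorname{sign}(\beta_j)\sqrt{|\beta_j|}$; then $u\circ v=\beta$ and $u^\top u+v^\top v=2\|\beta\|_1$, so $g(u,v)=f(\beta)$, giving $\inf_{u,v}g\le\inf_\beta f$. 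Combining the two proves part 1.

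For part 2, I would first show that at any local minimum $(\hat u,\hat v)$ of $g$ the two factors are \emph{balanced}, i.e.\ $|\hat u_j|=|\hat v_j|$ for every $j$. Fix $j$. If $\hat u_j,\hat v_j$ are both nonzero, the smooth curve $t\mapsto(\hat u_1,\dots,t\hat u_j,\dots;\hat v_1,\dots,\hat v_j/t,\dots)$ leaves the product $\hat u_j\hat v_j$, every other coordinate of $u\circ v$, and hence $h(u\circ v)$ unchanged, while the penalty becomes $\mathrm{const}+\tfrac{\lambda}{2}(t^2\hat u_j^2+\hat v_j^2/t^2)$; local minimality at $t=1$ forces the derivative in $t$ to vanish there, which gives $\hat u_j^2=\hat v_j^2$. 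If instead one factor, say $\hat v_j$, is zero while $\hat u_j\ne 0$, then shrinking $|\hat u_j|$ toward $0$ keeps the product at $0$ and strictly lowers the penalty, contradicting minimality; so a factor vanishes only when both do. Thus $|\hat u_j|=|\hat v_j|=\sqrt{|\hat\beta_j|}$ with $\hat\beta=\hat u\circ\hat v$, and consequently the part-1 inequality is tight at $(\hat u,\hat v)$: $g(\hat u,\hat v)=f(\hat\beta)$.

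Now suppose, for contradiction, that $\hat\beta$ is not a local minimum of $f$, so there is a sequence $\beta^{(k)}\to\hat\beta$ with $f(\beta^{(k)})<f(\hat\beta)$. I would lift each $\beta^{(k)}$ to a balanced pair by setting $u_j^{(k)}=s_j\sqrt{|\beta_j^{(k)}|}$ and $v_j^{(k)}=s_j'\sqrt{|\beta_j^{(k)}|}$, with signs chosen so that $s_js_j'=\operatorname{sign}(\beta_j^{(k)})$ and, whenever $\hat\beta_j\ne 0$, $s_j=\operatorname{sign}(\hat u_j)$ and $s_j'=\operatorname{sign}(\hat v_j)$. Then $u^{(k)}\circ v^{(k)}=\beta^{(k)}$ and the penalty equals $2\|\beta^{(k)}\|_1$, so $g(u^{(k)},v^{(k)})=f(\beta^{(k)})<f(\hat\beta)=g(\hat u,\hat v)$. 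One then checks $u^{(k)}\to\hat u$ and $v^{(k)}\to\hat v$: for coordinates with $\hat\beta_j\ne 0$ the signs are fixed for large $k$ (since $\beta_j^{(k)}$ eventually has the sign of $\hat\beta_j$) and $s_j\sqrt{|\beta_j^{(k)}|}\to\operatorname{sign}(\hat u_j)\sqrt{|\hat\beta_j|}=\hat u_j$, and similarly for $v$; for coordinates with $\hat\beta_j=0$ both $u_j^{(k)}$ and $v_j^{(k)}$ tend to $0=\hat u_j=\hat v_j$ regardless of the sign choices. This contradicts local minimality of $(\hat u,\hat v)$, completing part 2.

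I expect the only real obstacle to be the bookkeeping in this last step: arranging the sign choices in the lift so that $s_js_j'=\operatorname{sign}(\beta_j^{(k)})$ holds yet $(u^{(k)},v^{(k)})\to(\hat u,\hat v)$, treating separately the coordinates where $\hat\beta_j=0$ (where $\operatorname{sign}(\beta_j^{(k)})$ may oscillate but the magnitudes vanish) and those where $\hat\beta_j\ne 0$ (where the sign is eventually pinned down). The rest is the coordinatewise AM–GM inequality and a one-variable derivative computation.
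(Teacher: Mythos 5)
Your proof is correct and follows essentially the same route as the paper: show that the quadratic penalty dominates $2\lambda\|u\circ v\|_1$ with equality exactly at balanced pairs ($|u_j|=|v_j|$), show that every local minimizer of $g$ is balanced, and transfer local minimality to $f$ through balanced lifts. The only real difference is in the final step, where the paper relies on a one-line remark that the image of any ball under $(u,v)\mapsto u\circ v$ contains a ball around $\hat\beta$, while you spell out the balanced-lift sequence with its sign bookkeeping --- a more explicit (and arguably more complete) rendering of the same idea, since one does need the preimages to be balanced for $g$ at the preimage to equal $f$ at the image.
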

\begin{proof}
To show item 1 we write $u= \beta/v$,
where ``$/$'' denotes element-wise division,  so
that
\begin{align*}
\inf_{u,v} g(u,v) & = \inf_{\beta,v} g(\beta/v,v)  \\
 &= \inf_\beta \inf_v \left\{   h(\beta )  + \lambda 
  \left  (  ||\beta/v||^2  + ||v||^2\right  )/2  \right \} \\
&= \inf_\beta \left \{  h(\beta)  +
   \lambda \inf_v \left (  ||\beta/v||^2  + ||v||^2\right )/2  \right \} . 
\end{align*}
The inner infimum over $v$ 
is attained, and a minimizer $\tilde v$ can be 
can be found element-wise. The
$j$th element $\tilde v_j$ of a minimizer $\tilde v$ is simply
a minimizer of
$\beta_j^2 /v_j^2 + v_j^2$. If $\beta_j$ is zero then $\tilde v_j= 0$
is the unique global minimizer. Otherwise, this function is strictly
convex in $v_j^2$ with a unique minimum at $\tilde v_j^2 = |\beta_j|$.
The inner minimum  is therefore
\begin{align*} 
|| \beta/\tilde v||^2 + ||\tilde v||^2&  = \sum_{j=1}^p \left ( \beta_j^2 /\tilde v_j^2  + \tilde v_j^2 \right )  \\
&=    \sum_{j=1}^p \left  ( \beta_j^2/|\beta_j| + |\beta_j |  \right )
 = 2 || \beta ||_1, 
\end{align*}
and so 
\begin{align*} 
\inf_{u,v} g(u,v) 
 & = \inf_{\beta,v} g(\beta/v,v) \\
&= \inf_\beta \left \{  h(\beta) +
 \lambda \min_v\left  (  ||\beta/v||^2  + ||v||^2\right )/2   \right \} \\
&= 
 \inf_\beta \left \{ h(\beta )  + \lambda ||\beta||_1 \right \}  = \inf_\beta f(\beta) .  
\end{align*}
This proves item 1. In this proof, we saw
that the constrained minimum of $u^\top u + v^\top v$ 
subject to $u\circ v=\beta$ is  
attained when $u_j^2 = v_j^2 = |\beta_j|$. 
Since $h$ only depends on $u \circ v$, 
a local minimizer $(\hat u, \hat v)$ of  $g$ 
must also be a minimizer of 
$u^\top u + v^\top v$ subject to the constraint that
$u\circ v= \hat u \circ \hat v = \hat \beta$, and so 
$\hat u_j^2 = \hat v_j^2=|\hat \beta_j|$, giving a local 
minimum value of 
$g(\hat u, \hat v) = h( \hat \beta) + \lambda ||\hat \beta||_1 = 
  f( \hat \beta)$. 
That this must be a local minimum of $f$ follows from the 
fact that 
the image of any ball in $\mathbb R^{p}\times \mathbb R^p$ around $(\hat u,\hat v)$
under the mapping $(u,v)\rightarrow u \circ v$ contains a ball 
in $\mathbb R^p$ around $\hat u \circ \hat v$.  This proves item 2. 
\end{proof}

We now return to the definitions of $f$ and $g$ in 
Equations (\ref{eqn:l1_obj}) and 
(\ref{eqn:hpp_obj}),  where 
$h(\beta) = \beta^\top Q \beta - 2 \beta^\top l $. 
Since all local minimizers of $f$ are global minimizers 
\citep{tibshirani_2013},
item 2 of the lemma shows that 
any local minimizer $(\hat u, \hat v)$ of $g$ provides 
a global optimizer  $\hat \beta  = \hat u \circ \hat v$
of the lasso objective function. 
In other words, lasso estimates can be obtained 
from local minimizers of $g$.  
Such minimizers can be found with a simple and intuitive 
alternating ridge regression algorithm. 
To see this, rewrite $( u\circ v)^\top Q (u \circ v)$ as 
    $u^\top ( Q \circ  v v^\top ) u$, and 
    $(u\circ v)^\top l$ as $u^\top (v\circ l) $, so that
\[
 g(u,v) =  u^\top ( Q \circ  v v^\top + \tfrac{\lambda}{2} ) u + 2u^\top (v\circ l)  + \lambda v^\top v/2. 
\]
This is quadratic in $u$ for fixed $v$, 
with a unique minimizer  of
 $\tilde u = ( Q \circ  v v^\top + \tfrac{\lambda}{2} I  )^{-1} (l \circ v)$. 
Similarly, the unique minimizer of $g(u,v)$  in $v$  for fixed $u$ is 
 $\tilde v = ( Q \circ  u u ^\top + \tfrac{\lambda}{2} I  )^{-1} (l \circ u)$.
Iteratively optimizing $u$ and then $v$ given each other's current 
value 
is a type of coordinate descent algorithm. 
Since each conditional minimizer is unique, the algorithm 
will converge to a stationary point $(\hat u, \hat v)$ of $g$ \citep{luenberger_ye_2008}. 
At convergence, derivatives can be calculated to check if the 
point is a local minimizer (and therefore also a 
global minimizer). Alternatively, 
the optimality of
$\hat \beta= \hat u \circ \hat v$ 
can be
evaluated by checking if the Karush-Kuhn-Tucker (KKT) conditions
are approximately met:
Following \citet{tibshirani_2013},
the vector $\hat \beta$ is a global minimizer of
$f$ if
\begin{align}
2(l_j - [  Q \hat \beta]_j )/\lambda  &=  \text{sign}(\hat \beta_j )  \  \text{if} \ \hat\beta_j \neq 0  \label{eqn:kkt1},  \\
2(l_j - [  Q \hat \beta]_j )/\lambda & \in [-1,1]  \, \ \  \  \text{if} \  \hat \beta_j =0.  \label{eqn:kkt2} 
\end{align}
It is interesting to note that 
any stationary point $(\hat u,\hat v)$ of $g(u,v)$  will give 
a value $\hat \beta = \hat u \circ \hat v$ that satisfies
(\ref{eqn:kkt1}). To see this, note that at a critical point we have 
$(Q \circ \hat v \hat v^\top + \tfrac{\lambda}{2} I ) \hat u  = l\circ \hat v$,
which implies  
\begin{align*}
\hat u_j  &= 2 \hat v_j( l_j - [Q (\hat u\circ \hat v)]_j )/\lambda.
\end{align*}
Similarly, 
$\hat v_j = 2 \hat u_j ( l_j - [Q (\hat u\circ \hat v)]_j )/\lambda $.
If $\hat \beta_j = \hat u_j \hat v_j \neq 0$, then neither $\hat u_j$ nor  
$\hat v_j$ equal zero either, and so 
    $\hat u_j /\hat v_j = 2(l_j - [Q(\hat u \circ \hat v)]_j )/\lambda   
    = \hat v_j/\hat u_j. $
This implies that $\hat u_j^2=\hat v_j^2 $, or 
equivalently,
\begin{align*} 
2 (l_j - [ Q\hat \beta ]_j )/\lambda = 
\frac{\hat u_j}{\hat v_j} 
 &=     \frac{ \text{sign}(\hat u_j)}{ \text{sign}(\hat v_j)}   \\
 &  =  
  \text{sign}(\hat u_j \hat v_j) = 
\text{sign}(\hat \beta_j), 
\end{align*}
and so condition (\ref{eqn:kkt1}) is met. Not all stationary points
will satisfy (\ref{eqn:kkt2}), though. For example, the point 
$(0,0)\in \mathbb R^p \times \mathbb R^p$ is a stationary 
point of $g$ but is not a local minimum and does not satisfy 
 (\ref{eqn:kkt2}). 
However, in all of the numerical examples I have 
evaluated, 
the HPP algorithm has converged 
to objective function values that were as good or 
better than those of other algorithms.

\subsection{Numerical evaluation}
\label{ssec:l1numeric}
The HPP provides a  simple, intuitive algorithm for obtaining
lasso regression estimates.
Given
a starting value $v$ (such as one based on an OLS or ridge regression
estimate),
the algorithm is to
iterate steps 1 and 2 below until 
a convergence criteria is 
met:
\begin{enumerate}
\item Set $u = (Q \circ  vv^{\top} + \tfrac{\lambda}{2} I  )^{-1} (l \circ v)$;
\item Set $v = (Q \circ  uu^\top + \tfrac{\lambda}{2} I  )^{-1} (l \circ u)$.
\end{enumerate}
One justification of the HPP algorithm is that,
for some researchers, optimization of
$g$ via this HPP algorithm may be more intuitive and easier to code than
alternative optimization schemes for $f$ that require
an understanding of convex optimization.
With this in mind, it is of interest to compare the convergence of
the HPP algorithm to other intuitive and/or easy to implement
algorithms.
One such algorithm
is the local quadratic approximation (LQA) algorithm
of \citet{fan_li_2001}, which
also proceeds via iterative ridge regression. 
Specifically, one iteration of the LQA algorithm
is as follows:
\begin{enumerate}
\item Compute $D=\text{diag}( |\beta_1|^{-1} ,\ldots, |\beta_p|^{-1});$
\item Set $\beta=( Q  + \tfrac{\lambda}{2} D)^{-1} l$.
\end{enumerate}
The idea behind this algorithm is that $\beta^T  D\beta /2$ is a
quadratic approximation to the $L_1$ penalty
$||\beta||_1$ in a neighborhood around the current value of $\beta$.
This algorithm can equivalently be interpreted as an
EM algorithm for finding the posterior
mode of $\beta$ under independent Laplace prior
distributions on the elements of $\beta$
\citep{figueiredo_2003}.
However, inspection of the algorithm reveals a potential
problem: As entries of $\beta_j$ approach zero the
corresponding diagonal entries of $D$ approach infinity,
which
could lead to
numerical instability in the calculation of the
update to $\beta$ in step 2 of the algorithm.
In particular, the condition number of the
matrix $Q +  q \tfrac{\lambda}{2} D$
will generally approach infinity as
entries of $\beta$ approach zero.
\citet{hunter_li_2005} propose to remedy to this
potential numerical instability of LQA
by
perturbing the
update in step 2 so that
$D$ remains bounded.
In the context of $L_1$-penalized estimation,
this modification amounts to
replacing
$|\beta_j|^{-1}$,
the $j$th diagonal element of $D$,
with $(|\beta_j| + \epsilon)^{-1}$, where
$\epsilon$ is some small positive number, thereby ensuring that
the condition number of  $Q+\tfrac{\lambda}{2} D$  does not
go to infinity.
Note that in contrast,
the matrices $Q \circ vv^\top + \tfrac{\lambda}{2} I$
and $Q \circ uu^\top + \tfrac{\lambda}{2} I$  in
steps 1 and 2 of the HPP algorithm require no such modification, and
remain well-conditioned as elements of $u$ or $v$ approach
zero.
\begin{figure}
\centerline{\includegraphics[width=6in]{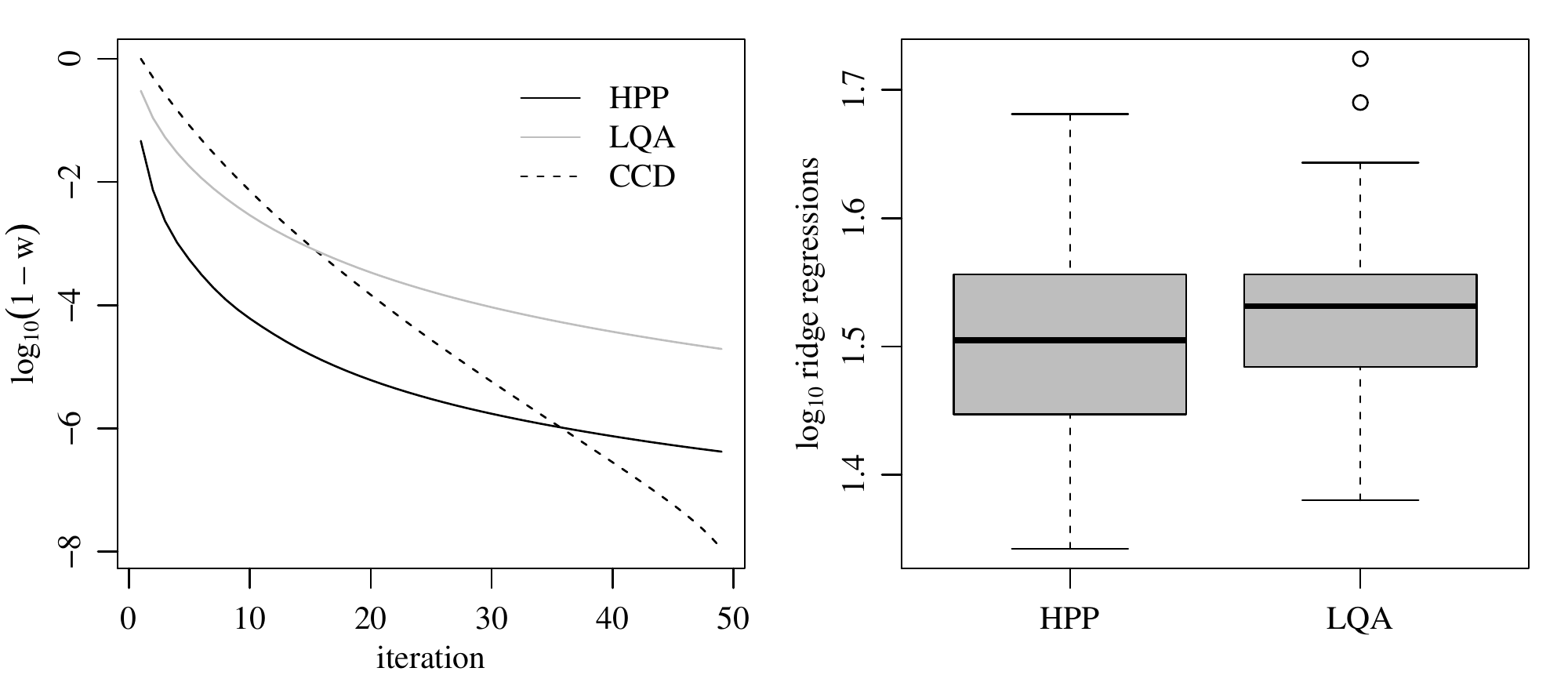}}
\caption{Simulation results for $L_1$-penalized regression.
The left panel shows the progress of each algorithm per
iteration, on average across the 100 datasets.
The right panel shows the variability in the number of
ridge regressions 
required until
the convergence criterion is met. }
\label{fig:lm_l1p}
\end{figure}
Another natural candidate for comparison to the HPP algorithm is
the ``shooting'' cyclic coordinate descent (CCD) algorithm
described by \citet{fu_1998}.
This algorithm optimizes the
objective function iteratively for each element of $\beta$
using directional derivatives.
While somewhat more complex than 
HPP and LQA in terms of  understanding and implementation,
CCD  does  not require matrix
inversions and is
perhaps the most popular algorithm for obtaining $L_1$-penalized
regression coefficients.

The convergence properties of the HPP, CCD and  $\epsilon$-perturbed  LQA
algorithms (with $\epsilon=10^{-12}$) were compared on 100
datasets that were simulated from the linear regression model
$y \sim N_n( X\beta ,  I)$.
A different value of $\beta$ was generated for each dataset,
with entries simulated independently from a
50-50 mixture of a point-mass at zero and
a mean-zero normal distribution with a standard deviation of 1/2.
For each dataset,
the entries
of the design matrix
$X$ were independently simulated from a standard normal distribution.
Results are presented here for the case that
$n=150$ and $p=100$. 
Other simulation scenarios may be explored using the replication code 
available at my website. 

For each simulated dataset,
a moment-based
empirical Bayes estimate of $\lambda$ was obtained and
all three algorithms were iterated 50 times, starting at
the unpenalized least squares estimate. 
Let $f( \beta_h^{(i)})$,
$f( \beta_l^{(i)} )$ and $f( \beta_c^{(i)})$
 denote the values of the objective
function at the $i$th iterate of the HPP, LQA and CCD algorithms,
respectively.
Letting
$f_{\max} = \max \{ f( \beta_h^{(1)}),f( \beta_l^{(1)}), f( \beta_c^{(1)})\}$
and
$f_{\min}=\min \{ f( \beta_h^{(50)}),f( \beta_l^{(50)}),f( \beta_s^{(50)})\}$,
the value of $w^i_h =  ( f_{\max} - f( \beta_h^{(i)} ) )/
               ( f_{\max} - f_{\min}  )  \in [0,1] $
measures the progress of the HPP algorithm at iterate $i$,
and $w_l^i$ and $w_c^{i}$  can be defined similarly for the  LQA and CCD
algorithms.
The upper-left panel of
Figure \ref{fig:lm_l1p} plots the values of
$w_h^i$, $w_l^i$ and $w_c^i$ for each iteration $i$, on average
across the 100 simulated datasets.
The HPP algorithm makes
substantially faster progress than either of the other two
algorithms initially.
However, 
HPP requires two ridge regressions per iteration,
whereas LQA requires only one.
To compare the computational costs of these algorithms,
we need to evaluate the number of iterations  
each requires to find a solution.
To this end, each of the three 
algorithms was iterated until
\begin{equation}
\max_{j}\left\{ (\beta^{(i)}_j - \beta^{(i+1)}_j)^2 \sum_{k=1}^n x_{k,j}^2\right \} 
  \leq \delta, 
\label{eqn:ccrit}
\end{equation}
 where $\delta$ was taken to be $\delta= 10^{-6}$. 
The left side of this inequality
is the convergence statistic used by the {\sf R}-package
{\tt glmnet} \citep{friedman_hastie_tibshirani_2010}. 

Objective functions and parameter estimates at convergence were compared
across all three algorithms, and there was no evidence of any substantial
differences: Relative differences in objective function values
were below 0.001\% $(10^{-5})$ for all 100 datasets, 
and relative squared differences
among parameter estimates  $\hat \beta$ and fitted values
$X\hat\beta$ were also all less than
0.001\%. 
The relative
mean-squared estimation error  $|| \hat\beta - \beta ||^2/||\beta||^2$ and the
mean-squared prediction error 
$|| X (\hat\beta - \beta) ||^2/|| X \beta||^2$
were 0.083
and 0.084, respectively, for all three algorithms.

The median numbers of iterations until convergence 
criterion (\ref{eqn:ccrit}) was met 
were 16, 34 and 29 for  HPP, LQA and CCD 
respectively.  
The variability in the number of iterations
until convergence for HPP and LQA is displayed in
the second panel of Figure \ref{fig:lm_l1p},
where for comparison the results are given in terms of
the number of
ridge regressions required until convergence.
Roughly speaking,
HPP takes twice as much time per iteration
but requires slightly less than half as many iterations to converge,
resulting in a small reduction in 
average computational costs relative to LQA. 

The computational 
costs of CCD are hard to compare to those of
HPP and LQA as CCD involves
different types of calculations at each iteration: 
CCD
updates each of the $p$ coefficients cyclically, whereas 
HPP and LQA 
update multiple parameters at once but require 
matrix inversions at each iteration. 
An informal comparison on
my desktop computer gave the  total run-time to convergence for
all 100 datasets being
0.96, 1.10 and 3.40 seconds for the HPP, LQA and CCD algorithms respectively, 
using convergence criteria (\ref{eqn:ccrit}) with $\delta=10^{-6}$. 
All algorithms were coded in the {\sf R} programming environment 
using no {\sf C} or {\sf FORTRAN} code. It is likely that the runtime 
of CCD would improve relative to the other methods if such code 
were used, as each iteration of CCD involves a for-loop over 
the elements of $\beta$ which is particularly slow in {\sf R}.

\subsection{HPP for sparse high-dimensional regression}  
While simple to explain and implement, the HPP algorithm 
requires two matrix inversions 
per iteration and so  becomes increasing computationally
costly as $p$ increases. 
Such  costs can be reduced by  
updating  $u$ and $v$ via Cholesky decompositions 
instead of matrix inversions 
(see the replication code for details), but these calculations 
still require $O(p^3)$ operations per decomposition. 

However, the structure of the HPP algorithm permits a 
modification that can substantially reduce 
computational costs in sparse high-dimensional 
settings. Recall that the HPP update for the vector 
$u$ is 
$u=(Q \circ vv^\top + \tfrac{\lambda}{2} I)^{-1} ( l\circ v)$, 
with an analogous update for $v$. 
If $v$ is sparse then  so is 
$l\circ v$, and the rows and columns of 
the matrix $(Q \circ vv^\top + \tfrac{\lambda}{2} I)$
can be reordered to yield a block diagonal 
matrix with one block being $\lambda/2$ times 
an identity matrix. 
From this we can see that the elements of the updated $u$-vector 
corresponding the zero elements of $v$ will also  
be zero,  while the remaining elements of 
$u$ will be given by 
$\tilde u = ( \tilde Q \circ \tilde v \tilde v^\top + \tfrac{\lambda}{2}I)^{-1}
 (\tilde l \circ \tilde v) ,
$
where $\tilde Q$, $\tilde l$, and $\tilde v$ are the 
submatrix and subvectors of $Q$, $l$, and $v$ 
corresponding to the non-zero elements of $v$. 
If the number of non-zero elements of $v$ is small, then 
the HPP update for $\tilde u$ (and thus $u$) can be 
computed quickly. 
Such a simplification is also 
possible for LQA in a limiting sense: 
If $D= \text{diag}(|\beta_1|^{-1},\ldots, |\beta_p|^{-1})$  then
the LQA update  matrix $(Q+\tfrac{\lambda}{2} D )^{-1}$ gets closer to being 
block diagonal 
as the elements of $\beta$ approach zero.  

Unfortunately, these algorithms cannot take 
advantage of sparsity because neither algorithm 
produces coefficient updates that are exactly zero. 
A simple way to overcome this limitation 
is to set parameter values to zero if they 
are less in absolute value than some  prespecified threshold. 
While this ad-hoc solution can induce exact sparsity, 
the caveat is that in doing so the algorithm may get 
trapped: 
Once an entry of $u$ or $v$ is set to zero, 
it remains zero for all iterations of HPP that follow.
An easy fix to this potential problem is to induce 
sparsity not by ad-hoc thresholding, but by performing 
a CCD step,  which can update a parameter from sparse to 
non-sparse and vice-versa. 
One version of such a mixed algorithm 
is to alternate HPP and CCD steps, resulting in what 
may be called an ``HPCD'' algorithm (Hadamard product, cyclic descent). 
Given a current value of $\beta$, two consecutive iterations of an 
HPCD algorithm proceeds as follows:
\begin{enumerate}
\item Update each element of $\beta$ iteratively with CCD.
\item Let $\tilde \beta$ be the nonzero values of $\beta$ and 
       $\tilde v$ be the square root of the absolute values of $\tilde \beta$. 
\begin{enumerate}
\item Set $\tilde u = ( \tilde Q \circ \tilde v \tilde v^\top + \tfrac{\lambda}{2}I)^{-1} (\tilde l \circ \tilde v)$;
\item Set $\tilde v = ( \tilde Q \circ \tilde u \tilde u^\top + \tfrac{\lambda}{2}I)^{-1} (\tilde l \circ \tilde u)$;
\item Set $\tilde \beta = \tilde u \circ \tilde v$. 
\end{enumerate} 
\end{enumerate}
Similarly, an ``LQCD'' algorithm (local quadratic approximation, cyclic descent)
may be constructed by alternately performing a CCD update and then 
an LQA update on the non-zero coefficients. 

The HPCD, LQCD and CCD algorithms were compared in a simulation 
study of 
100 datasets, simulated as before except now 
$p=1000$. 
Each algorithm was iterated 300 times on all 100 datasets, 
and the per-iteration 
convergence progress was averaged across datasets in the same 
manner as 
in the previous simulation study. The results,  displayed 
graphically in the left panel of Figure \ref{fig:bigp}, 
\begin{figure}
\centerline{\includegraphics[width=6in]{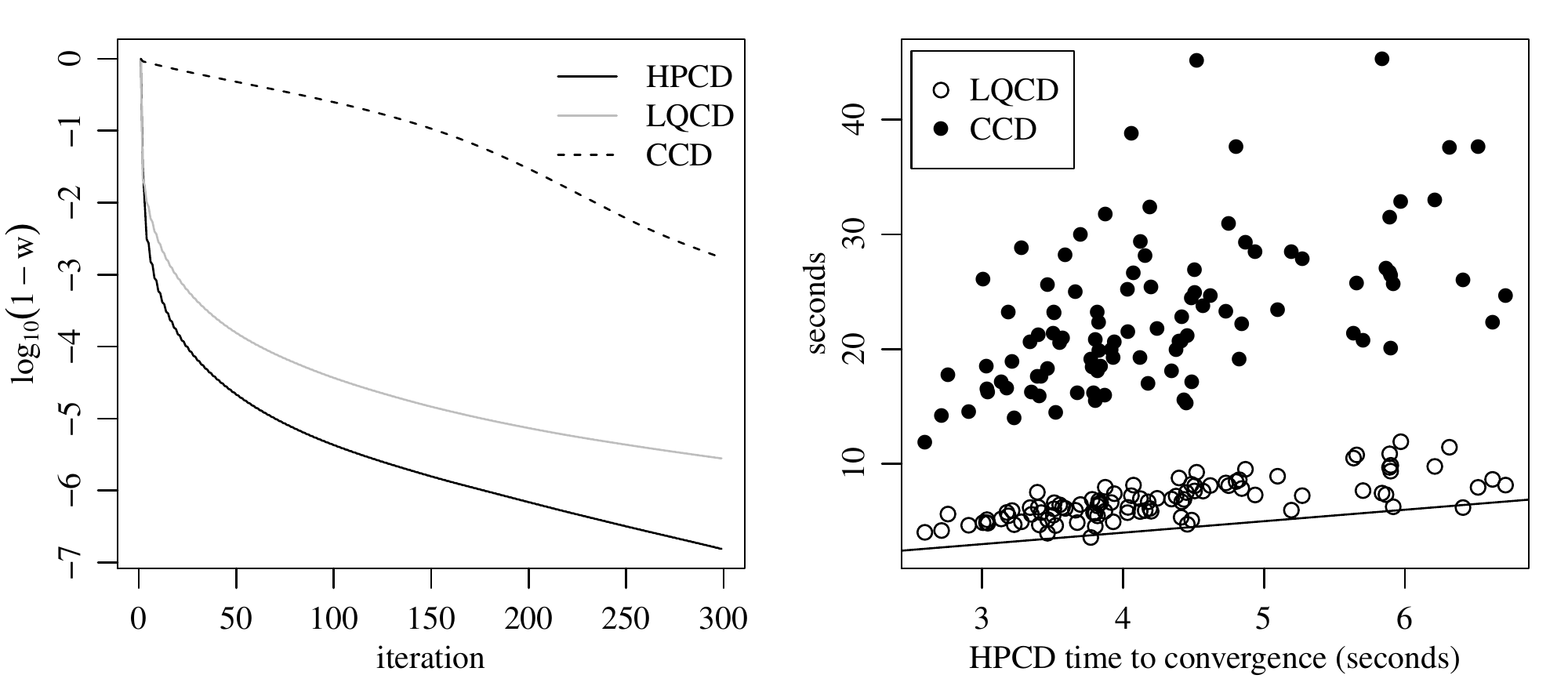}}
\caption{Simulation results for  $L_{1}$-penalized regression with $p>n$. 
The left panel shows the progress of each algorithm per
iteration, on average across the 100 datasets.
The right panel shows the time in seconds to convergence of 
each algorithm, across datasets.}
\label{fig:bigp}
\end{figure}
indicate that on a per-iteration basis HPCD 
faster than LQCD and 
substantially 
faster than CCD. 

The algorithms were also implemented on each dataset until 
convergence criteria (\ref{eqn:ccrit}) was met, with 
$\delta=10^{-6}$ as before. The three algorithms 
attained nearly the same objective function values as each other for 
each dataset, 
with relative differences being less than 0.001\% ($10^{-5}$)
across all datasets. 
Relative squared differences
among parameter estimates  $\hat \beta$ and fitted values
$X\hat\beta$ were less than
$1\%$ for all 
 simulated datasets.
Relative 
mean-squared estimation and prediction errors were 1.11 and 1.09 
respectively, for all algorithms. 
The median number of iterations until 
convergence were 
328, 649 and 1638 for the  HPCD, LQCD and CCD algorithms respectively. 
The computational costs per iteration 
of these algorithms 
are difficult to compare since the 
sizes of the matrices inverted by HPCD
and LQCD  vary with the sparsity 
level of the parameter estimate. 
However, the amount of computer time until convergence  
was recorded for each algorithm and dataset, and is displayed 
graphically in the right panel of Figure \ref{fig:bigp}. 
The HPCD algorithm was about 60\% faster than LQCD on average, 
and more than five times faster than CCD
(when implemented in {\sf R}). 

\subsection{Correlated predictors}
Since each iteration of CCD updates one element of $\beta$ at a time, 
its convergence properties may suffer if the Hessian 
of the objective function is not well conditioned.
This can occur if the columns of  $X$ 
are correlated
\citep{friedman_hastie_tibshirani_2010}. Conversely, HPP and LQA 
both update
entire vectors of parameters at once, and so their 
convergence properties may be robust to  
correlation among the predictors. We investigate this 
briefly with two simulation studies that are identical to  the previous two, 
except now each 
$X$ is a column-standardized version of 
the  random matrix $U  V^\top + E$ where 
$U\in \mathbb R^{n\times r}$, $V\in \mathbb R^{p\times r}$ and 
$E \in \mathbb R^{n\times p}$ are matrices with i.i.d.\ 
standard normal entries, with $r=p/10$. 
To see how this produces correlated predictors, note 
that for fixed $V$ the rows of  $U  V^\top + E$  
have covariance $VV^\top + I$. 

As before, the different algorithms were applied to 
each simulated dataset for a fixed 
number of iterations, and their per-iteration progress towards 
the optimal objective function was averaged across datasets. 
Results for the $p=100$ simulation study are shown in the left panel 
of Figure \ref{fig:corX}. Comparing this to the left panel of 
Figure \ref{fig:lm_l1p}, 
the relative convergence rate of CCD is substantially 
reduced as compared to the case of uncorrelated 
predictors. The median numbers of iterations to convergence 
were 35, 74 and 192 for HPP, LQA and CCD respectively, 
representing a roughly two-fold increase for HPP and LQA but 
more than a six-fold increase for CCD. 

The second panel of the figure compares HPCD, LQCD and CCD in the 
case that $p=1000$. The performance of CCD is 
much worse than HPCD and LQCD on a per-iteration basis, even 
more so than in the previous study with $p=1000$ (see Figure \ref{fig:bigp}). 
The median numbers of iterations to convergence for HPCD, LQCD and 
CCD for this simulation study were 240, 458 and 2427. 
This represents a decrease in the number of iterations until 
convergence for HPCD and LQCD as compared to the uncorrelated case, 
but the number of iterations needed by CCD is now roughly 10 times 
that of HPCD. 
Average time to convergence was 3.2, 4.9 and 32.0 seconds for 
HPCD, LQCD and CCD respectively, and so for this simulation scenario
HPCD is about 50\% 
faster than LQCD, and 10 times faster 
than CCD as implemented in {\sf R}.

\begin{figure}
\centerline{\includegraphics[width=6in]{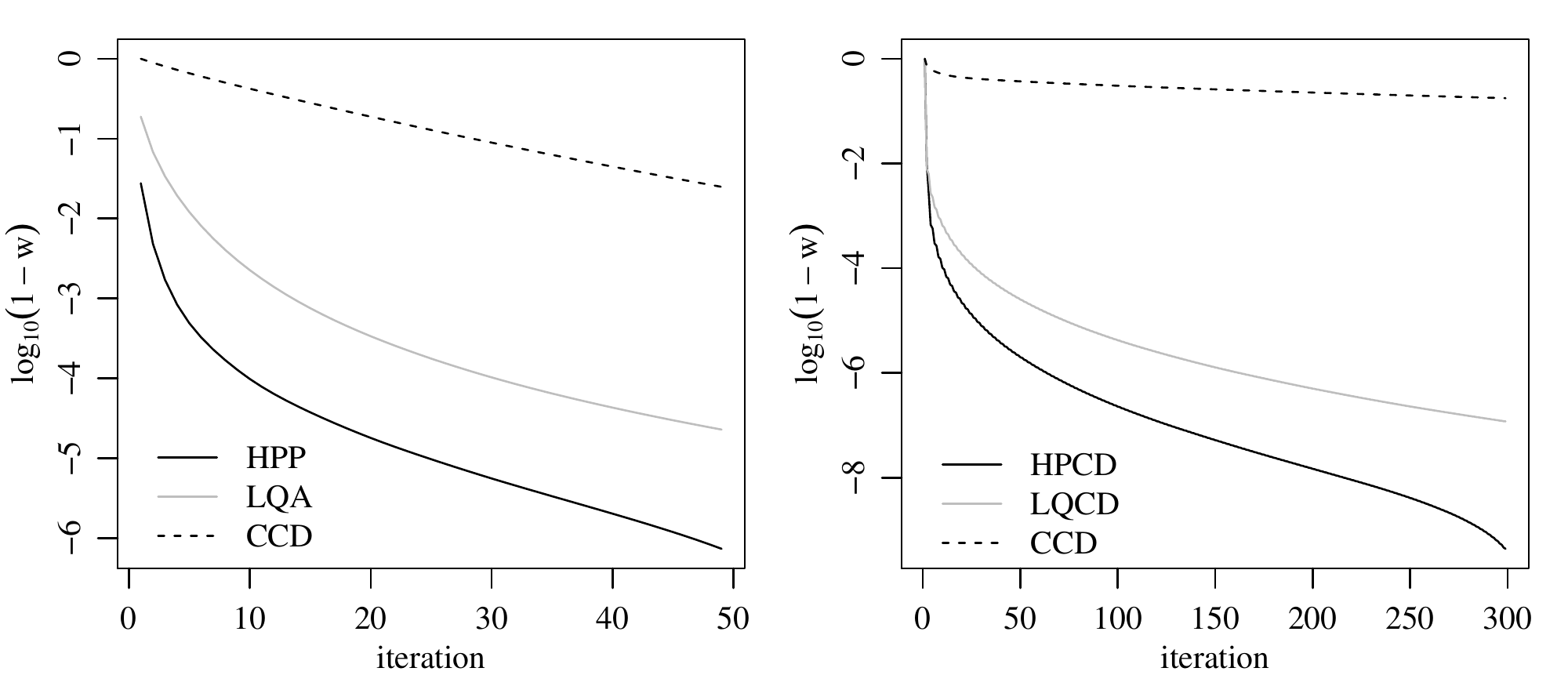}}
\caption{Average convergence progress for $L_{1}$-penalized regression 
algorithms with correlated predictors. The $p=100$ scenario is on the left, 
the $p=1000$ scenario is on the right. }
\label{fig:corX}
\end{figure}

\section{HPP for non-convex penalties}

\subsection{HPP for $L_q$-penalized linear regression}
A natural generalization of the  HPP is to write
$\beta = u_1 \circ \cdots \circ u_K $
and optimize
\begin{equation}
g(u_1,\ldots, u_K) = ( u_1 \circ \cdots \circ u_K )^\top Q  (u_1 \circ \cdots \circ u_K )- 2 ( u_1 \circ \cdots \circ u_K  )^\top l +  
   \tfrac{ \lambda}{K}\left (  u_1^\top u_1  + \cdots   u_K^\top u_K \right ). 
\label{eqn:khpp_obj}
\end{equation}
For $K=1$
the optimal $u$-value is the $L_2$-penalized
ridge regression estimate, and for $K=2$
the optimal value of  $(u_1,u_2)$ gives the
$L_1$-penalized lasso regression estimate, as discussed in the previous
section.
Values of $K$ greater than 2 correspond to
non-convex $L_q$ penalties with $q=2/K$.
For example, the $L_{1/2}$-penalized estimate is obtained by
optimizing (\ref{eqn:khpp_obj}) with $K=4$.
Non-convex penalties such as these have been studied by
 \citet{fan_li_2001},
\citet{hunter_li_2005},
\citet{kaban_durrant_2008},
\citet{zou_li_2008}, and
\citet{kaban_2013}
among others.
Such non-convex penalties induce sparsity
without the severe penalization
of large parameter values that is imposed by convex penalties,
such as the $L_1$ and $L_2$ norms.
Defining $||\beta||_q^q = \sum_j |\beta_j|^q$, the correspondence
between the $L_q$ penalties and the HPP
is given by the following lemma:
\begin{lemma}
Let $f(\beta) = h(\beta) + \lambda || \beta||_q^q $ where $q=2/K$, and let
 $g(u_1,\ldots, u_K)=h(u_1\circ \cdots \circ u_K)+ 
 \lambda( u_1^\top u_1 + \cdots u_K^\top u_K)/K$.
Then
\begin{enumerate}
\item $\inf_\beta  f(\beta) =  \inf_{u_1,\ldots, u_K} g(u_1,\ldots, u_K)$;
\item if $(\hat u_1,\ldots, \hat u_K)$ is a local minimum of $g$, then
   $\hat\beta =\hat u_1 \circ \cdots \circ\hat u_K$ is a local minimum of $f$.
\end{enumerate}
\label{lemma:fnp_equiv}
\end{lemma}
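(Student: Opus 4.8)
The plan is to mimic the proof of Lemma~\ref{lemma:lasso_equiv}, replacing the two-term bound $\beta_j^2/v_j^2+v_j^2\ge 2|\beta_j|$ by the $K$-term arithmetic--geometric mean (AM--GM) inequality. For item~1, I would fix $\beta$ and minimize $u_1^\top u_1+\cdots+u_K^\top u_K$ over all $(u_1,\dots,u_K)$ with $u_1\circ\cdots\circ u_K=\beta$, one coordinate at a time. For coordinate $j$ one minimizes $\sum_{k=1}^K u_{k,j}^2$ subject to $\prod_{k=1}^K u_{k,j}=\beta_j$; if $\beta_j=0$ the infimum is $0$, attained by sending any one factor to zero, and otherwise, writing $s_k=u_{k,j}^2>0$ with $\prod_k s_k=\beta_j^2$, AM--GM gives $\sum_k s_k\ge K(\prod_k s_k)^{1/K}=K|\beta_j|^{2/K}=K|\beta_j|^q$, with equality exactly when $s_1=\cdots=s_K=|\beta_j|^q$ (the sign of $\beta_j$ being matched by flipping an even or odd number of the $u_{k,j}$). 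Summing over $j$, the constrained infimum of $\sum_k u_k^\top u_k$ equals $K\|\beta\|_q^q$ and is attained, so $\inf_{u_1,\dots,u_K}g=\inf_\beta\{h(\beta)+\tfrac{\lambda}{K}\,K\|\beta\|_q^q\}=\inf_\beta\{h(\beta)+\lambda\|\beta\|_q^q\}=\inf_\beta f(\beta)$, which is item~1.

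For item~2, let $(\hat u_1,\dots,\hat u_K)$ be a local minimizer of $g$ and set $\hat\beta=\hat u_1\circ\cdots\circ\hat u_K$. Since $g$ is differentiable (in the linear-regression application $h$ is a quadratic), all of its partials vanish there: $\partial_j h(\hat\beta)\prod_{l\neq k}\hat u_{l,j}+\tfrac{2\lambda}{K}\hat u_{k,j}=0$ for every $j,k$. Multiplying the $(j,k)$ equation by $\hat u_{k,j}$ and using $\hat u_{k,j}\prod_{l\neq k}\hat u_{l,j}=\hat\beta_j$ gives $\hat u_{k,j}^2=-\tfrac{K}{2\lambda}\,\partial_j h(\hat\beta)\,\hat\beta_j$, a nonnegative value $t_j$ that does not depend on $k$; then $\hat\beta_j^2=\prod_k\hat u_{k,j}^2=t_j^K$ forces $t_j=|\hat\beta_j|^{2/K}=|\hat\beta_j|^q$, and in particular $\hat u_{k,j}=0$ for all $k$ whenever $\hat\beta_j=0$. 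Hence
\[
g(\hat u_1,\dots,\hat u_K)=h(\hat\beta)+\tfrac{\lambda}{K}\sum_j\sum_k\hat u_{k,j}^2=h(\hat\beta)+\tfrac{\lambda}{K}\sum_j K|\hat\beta_j|^q=h(\hat\beta)+\lambda\|\hat\beta\|_q^q=f(\hat\beta).
\]
(If one prefers not to assume $h$ differentiable, this identity follows as in Lemma~\ref{lemma:lasso_equiv}: a local minimizer of $g$ must, since $h$ depends only on the Hadamard product, minimize $\sum_k u_k^\top u_k$ among representatives of $\hat\beta$, and the element-wise problem of minimizing $\sum_k a_k^2$ over $\{a\in\mathbb R^K:\prod_k a_k=c\}$ has every local minimum global --- for $c\neq 0$ because in the variables $y_k=\log a_k^2$ the constraint is affine and $\sum_k e^{y_k}$ is convex, and directly for $c=0$.)

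It remains to pass from ``$(\hat u_1,\dots,\hat u_K)$ is a local minimizer of $g$ with value $f(\hat\beta)$'' to ``$\hat\beta$ is a local minimizer of $f$'', which I would do with the open-mapping argument already used in Lemma~\ref{lemma:lasso_equiv}. Let $B$ be a ball about $(\hat u_1,\dots,\hat u_K)$ on which $g\ge f(\hat\beta)$, and build a continuous map $\psi$ on a neighborhood of $\hat\beta$ with $\psi(\hat\beta)=(\hat u_1,\dots,\hat u_K)$, with $u_1\circ\cdots\circ u_K=\beta$ whenever $\psi(\beta)=(u_1,\dots,u_K)$, and with $\tfrac{\lambda}{K}\sum_k\|u_k\|^2=\lambda\|\beta\|_q^q$: coordinatewise take $|u_{k,j}|=|\beta_j|^{1/K}$ for all $k$, with the signs of the $u_{k,j}$ equal to those of the $\hat u_{k,j}$ when $\hat\beta_j\neq 0$, and, when $\hat\beta_j=0$, with $\text{sign}(\beta_j)$ carried on a single factor and the remaining factors nonnegative. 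Then for $\beta$ near enough to $\hat\beta$ that $\psi(\beta)\in B$ we get $f(\beta)=g(\psi(\beta))\ge f(\hat\beta)$, establishing item~2.

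The step that needs the most care is the construction of $\psi$ at coordinates with $\hat\beta_j=0$: there every $\hat u_{k,j}$ vanishes, so the signs of the $u_{k,j}$ are not pinned down by those of the $\hat u_{k,j}$, and one must check that the branch $|u_{k,j}|=|\beta_j|^{1/K}$ carrying $\text{sign}(\beta_j)$ on one factor is genuinely continuous at $\beta_j=0$ (it is, since both $|\beta_j|^{1/K}$ and $\text{sign}(\beta_j)|\beta_j|^{1/K}$ tend to $0$). This is exactly the point already dispatched in the $K=2$ proof of Lemma~\ref{lemma:lasso_equiv} and carries over verbatim; everything else is a routine generalization of that proof, the only genuinely new ingredient being the replacement of a single-variable convexity computation by AM--GM.
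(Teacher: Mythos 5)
Your proof is correct and takes essentially the same route as the paper's: reduce item~1 to the element-wise constrained minimization of $\sum_k u_{k,j}^2$ subject to $\prod_k u_{k,j}=\beta_j$, and then run the same local-minimum argument as in Lemma~\ref{lemma:lasso_equiv} for item~2. The only differences are cosmetic improvements: you use AM--GM where the paper invokes Lagrange multipliers for the element-wise problem, you actually carry out the details that the paper dispatches with ``the same logic as Lemma~\ref{lemma:lasso_equiv}'' (the stationarity computation, the local-implies-global claim on the fiber, and the continuous section through $(\hat u_1,\ldots,\hat u_K)$), and you correctly record the constrained minimum as $K\|\beta\|_q^q$ (so that the $\lambda/K$ scaling yields $\lambda\|\beta\|_q^q$), whereas the paper's text omits the factor of $K$.
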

\begin{proof}
Using Lagrange multipliers, it is straightforward to show that the
minimum value of  $\sum_{k} u_k^\top u_k $ subject
to the constraint that $\prod_k u_{k,j} = \beta_j$ is attained
when $|u_{k,j}|$ is constant across $k$. This implies that
at a minimizing value, $u_{k,j}^{2} =  |\beta_j|^{2/K}$ and
that  $\sum_{k} u_k^\top u_k  = ||\beta||_{q}^q$, where
$q=2/K$.
The lemma then follows using exactly the same logic as used to prove
Lemma \ref{lemma:lasso_equiv}.
\end{proof}
The lemma  suggests a simple alternating ridge regression algorithm
for obtaining $L_q$-penalized linear regression estimates
in cases where $q=2/K$ for some integer $K$.
Given
a starting value $(u_2,\ldots, u_K)$,
 repeat  steps 1 and 2 below
 for each $k=1,\ldots,K$, iteratively
 until convergence:
\begin{enumerate}
\item Set $v= (u_1 \circ  \cdots  \circ u_K)/u_k$;
\item Set $u_k = (Q \circ v v^T + \tfrac{\lambda}{K} I)^{-1} (l\circ v) $.
\end{enumerate}
Note that $v$ in item 1 is simply the Hadamard product of the vectors $\{ u_1,\ldots, 
u_K\}$ except for $u_k$.
Another intuitive algorithm that is even easier to code is
the LQA algorithm for $L_q$-penalized linear regression, which
proceeds by iterating the following steps:
\begin{enumerate}
\item Compute $D=\text{diag}( |\beta_1|^{q-2} ,\ldots, |\beta_p|^{q-2});$
\item Set $\beta=( Q  + q\tfrac{\lambda}{2} D)^{-1} l$.
\end{enumerate}
As discussed in the previous section, the condition number of
$Q  + q\tfrac{\lambda}{2} D$
in step 2 of the algorithm will generally converge to
infinity as elements of $\beta$ approach zero, leading to
the potential for numerical instability.  The $\epsilon$-perturbed version
of this algorithm
proposed by  \citet{hunter_li_2005}
is to replace
$|\beta_j|^{q-2}$,
the $j$th diagonal element of $D$,
with $|\beta_j|^{q-2}\tfrac{|\beta_j|}{|\beta_j| + \epsilon}$, where
$\epsilon$ is some small positive number.
While this term no longer remains bounded as $|\beta_j|\rightarrow 0$
if $q<1$, it does
result in a minorization-maximization algorithm for
optimizing a perturbed  version of the objective function
(see \citet{hunter_li_2005} for details).
An alternative to LQA is the 
local linear approximation method (LLA) of 
\citet{zou_li_2008}. For $L_q$-penalized regression,
the LLA algorithm consists of iteratively
solving
an $L_1$-penalized regression problem as follows:
\begin{enumerate}
\item Compute $D=\text{diag}( |\beta_1|^{q-1} ,\ldots, |\beta_p|^{q-1});$
\item Set $\beta=\arg \min_\beta  \beta^\top Q\beta - 2 \beta^\top l + 
  2 q  \lambda  \sum_{j=1}^p d_j |\beta_j|. $
\end{enumerate}
As described by \citet{zou_li_2008}, this approach avoids the
potential numerical instability of the LQA algorithm, but requires
an $L_1$-optimization at each iteration.

\begin{figure}
\centerline{\includegraphics[width=6in]{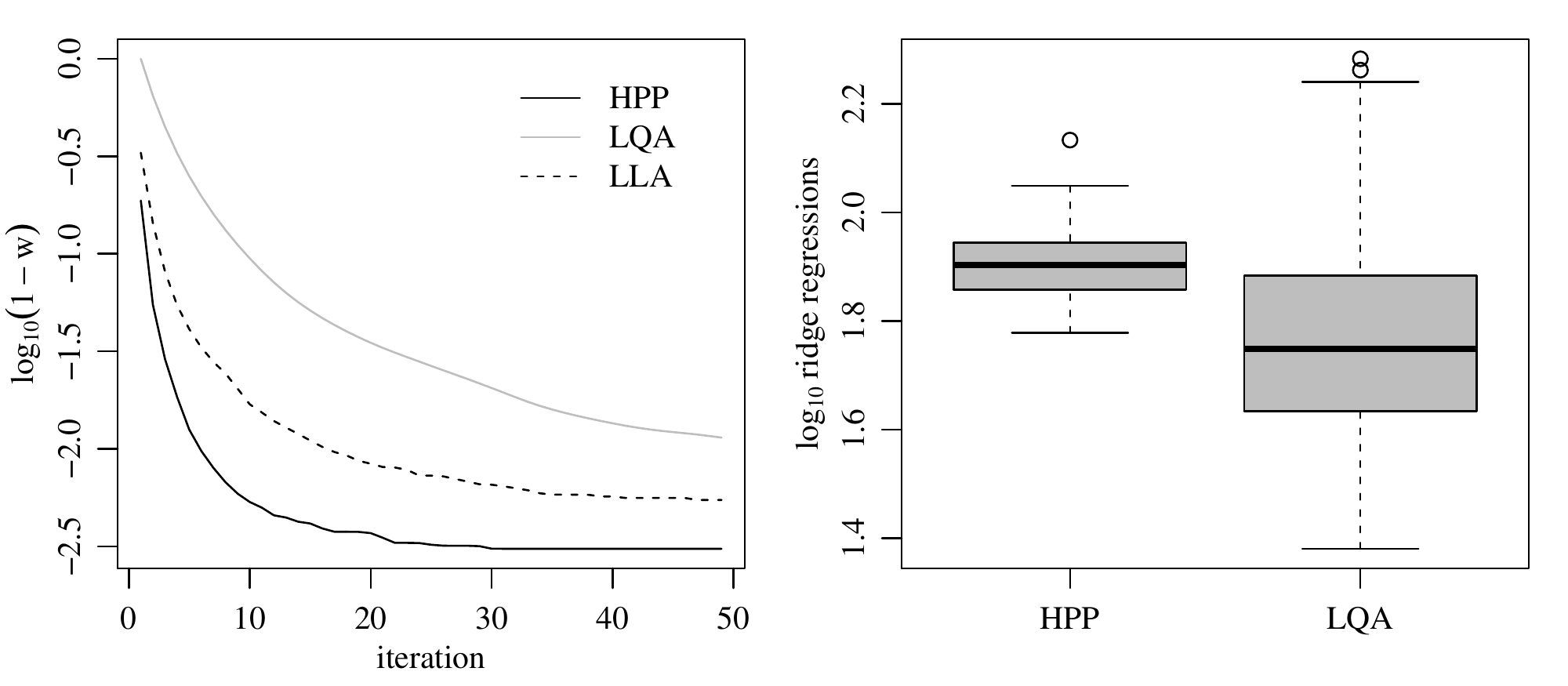}}
\caption{Simulation results for $L_{1/2}$-penalized regression.
The left panel shows the progress of each algorithm per
iteration, on average across the 100 datasets.
The right panel shows the variability in the number of
ridge regressions 
required until
the convergence criterion is met. }
\label{fig:lm_lqp}
\end{figure}

The convergence properties of HPP, LLA and $\epsilon$-perturbed LQA
were compared
on the
same simulated datasets as described in Section \ref{ssec:l1numeric},
but now using a nonconvex $L_{1/2}$  penalty corresponding to $K=4$.
A summary of the results are displayed in
Figure \ref{fig:lm_lqp}.
As can be seen in the left panel,
HPP converges at a faster rate per iteration
than LQA, on average across datasets.
Iterating until convergence, 
relative differences in objective function values
were all below .4\% $(0.004)$, 
and relative squared differences
among parameter estimates  $\hat \beta$ and fitted values
$X\hat\beta$ were less than
1\% for all but one of the simulated datasets. 
Relative mean-squared estimation error 
and 
 prediction
error
were 0.081
and 0.082, respectively, for all three algorithms.

The median numbers of iterations until convergence
were 20, 56 and 22 for the HPP, LQA and LLA
algorithms respectively.
Using the convergence criteria (\ref{eqn:ccrit}) with 
$\delta= 10^{-6}$, the LQA algorithm required
slightly more than three times (3.26) as many iterations as HPP
to converge,
on average across datasets.
However, HPP requires $K=4$
ridge regressions
per iteration as compared to
one for LQA. Taking this into account, the
two algorithms are comparable in terms
of computational burden,
as shown in the right panel
of Figure \ref{fig:lm_lqp}, 
with the LQA algorithm being slightly more efficient on average. 
While the computational costs of
HPP and LQA are easy to compare to each other,
the cost of LLA  is not easily comparable, as this algorithm involves
different operations than the other two and
depends on the particular
$L_1$-optimization method being used.
I compared the computational costs
of LLA to HPP and LQA in terms of their runtimes
on my desktop computer, using the {\sf R}-package {\tt penalized}
\citep{goeman_meijer_chaturvedi_2016} to implement the 
$L_1$-optimization required by LLA. 
The total elapsed time to convergence,
summed over all 100 simulated datasets,
was 3.3 and 2.2 seconds for HPP and LQA respectively, whereas
for LLA it was 15.0 seconds.
The relative lack of speed of the LLA algorithm is due to the fact that,
for these values of $n$ and $p$, performing an $L_1$-optimization
at each iteration is much more costly than performing the single
Cholesky factorization needed by HPP and LQA at each iteration.
For larger values of $p$ where Cholesky factorizations are more
costly, the LLA algorithm will be more competitive with
HPP and LQA.

\subsection{HPP for $L_q$-penalized generalized linear models}

The  arbitrariness of the 
function $h(\beta)$ in Lemmas \ref{lemma:lasso_equiv}
and  \ref{lemma:fnp_equiv} 
means that the HPP could be used
for penalized estimation in scenarios beyond linear regression. 
Consider a 
likelihood $L( \beta, \psi) = \prod_i p(y_i|\eta_i,\psi)$ where  
$\eta_i = \beta^\top x_i$ and 
$p(y_i | \eta_i ,\psi) = c(y_i) e^{\{(y_i \eta_i - A(\eta_i))/\psi \} }$, 
with $x_i$ being an observed 
$p$-variate vector of predictors for each observation $i$. 
The $L_q$-penalized likelihood estimate of $\beta$ is the maximizer 
of the penalized likelihood given by  
$\exp( -\lambda || \beta||_q^q/(2 \psi) ) \times 
  \prod_i p(y_i|\eta_i,\psi)$, 
or equivalently is the minimizer of two times the scaled negative log penalized 
likelihood 
\begin{align*}   
f(\beta)  =  2 \sum_i A(\beta^\top x_i) -2 
\beta^\top \sum_i y_i x_i 
          +  \lambda || \beta ||_q^q. 
\end{align*}
For example,  the special cases of 
linear regression, Poisson regression and logistic regression 
correspond to 
$A(\eta)$  being equal to  $\eta^2/2$, $e^{\eta}$ and 
$\ln(1+e^\eta)$  respectively.

By Lemma \ref{lemma:fnp_equiv}, 
if $q=2/K$ then the minimum of $f(\beta)$ is equal to the 
minimum of $g(u_1,\ldots, u_K)$, where 
\begin{align*}
g(u_1,\ldots, u_K) =  
  2 \sum_i A( [u_1\circ \cdots \circ u_K]^\top x_i)  - 2 
  (u_1\circ \cdots \circ u_K)^\top  X^\top y 
  + \frac{ \lambda}{K}\left (  u_1^\top u_1  + \cdots   u_K^\top u_K \right ). 
\end{align*}
Local minima of $g$ can be found using a variety of algorithms that 
iteratively update $u_1,\ldots, u_K$. 
For example, suppose we can optimize the following 
$L_2$-penalized 
likelihood function in 
$\beta$:
\[ f( \beta : X,\lambda ) =  2 \sum_i A(\beta^\top x_i) -2 
\beta^\top X^\top y 
          +  \lambda || \beta ||^2. 
\]
Now let $v= (u_1\circ \cdots \circ u_K)/u_k$. The part 
of $g$ that depends on $u_k$ can be written as 
\begin{align*}
g_k(u_k: X , v) &= 2 \sum_i A( [u_k\circ v]^{\top} x_i ) - 
  2 (u_k \circ v)^{\top} \sum_i y_i x_i + \tfrac{\lambda}{K} 
   u_k^\top u_k  \\
 &= 2 \sum_i A( u_k^\top [v\circ x_i] ) - 
    2 u_k^\top \sum_i y_i (v\circ x_i ) + \tfrac{\lambda}{K} 
   u_k^\top u_k  \\
  &=  f( u_k : \tilde X_k , \lambda/K ) , 
\end{align*}  
where $\tilde X_k = XD(v)$, so that 
the $i$th row of $\tilde X_k$ is $\tilde x_i = v\circ x_i$. 
If we can optimize $f(\beta:X,\lambda)$ in $\beta$, then 
we can optimize $g$ by iteratively optimizing 
$f( u_k : \tilde X_k , \lambda/K ) $ in $u_k$ for each $k=1,\ldots, K$. 
Therefore, any algorithm that provides 
$L_2$-penalized generalized linear model estimates 
can also be used to provide $L_q$-penalized estimates 
via the HPP, if $q=2/K$.  
For example,  $f(u_k: \tilde X_k , \lambda/K)$ 
can generally be optimized with the Newton-Raphson algorithm. The first and 
second derivatives of $f(u_k : \tilde X_k ,\lambda/K)$  are 
\begin{align*} 
d_k =  \frac{\partial f}{ \partial u_k }  = & 
  2 \left (   v \circ \sum x_i(\dot A (\eta_i)-y_i ) + \tfrac{\lambda }{K} u_k \right )  \\ 
H_k = \frac{\partial^2 f}{ \partial u_k \partial u_k^T }  &= 
2  \left ( v v^T \circ \sum \ddot A(\eta_i) x_i x_i^T   +  \tfrac{\lambda }{K}  I  \right  ), 
\end{align*}
where $\eta_i = (u_1\circ  \cdots \circ u_K)^T x_i$ and 
$\dot A$ and $\ddot A$ are the first and second derivatives 
of $A(\eta)$, respectively.
Critical points of $g$ can then be found by repeating the 
following steps iteratively 
for  $i=1,\ldots,K$ until 
convergence:
\begin{enumerate} 
\item Compute $v=( u_1\circ \cdots \circ u_K)/u_k$; 
\item Optimize $f(u_k: \tilde X_k ,\lambda/K)$ in $u_k$ 
     by iterating the following until convergence: 
\begin{enumerate}  
\item Compute $\eta_i = (u_k \circ v)^\top x_i$ for each $i=1,\ldots, n$, 
  then $d_k$ and $H_k$; 
\item Set $u_k$ to  $u_k - H_k^{-1} d_k $. 
\end{enumerate}
\end{enumerate}

The LQA strategy for $L_q$-penalized estimation in generalized linear models
is similar, except that at each iteration 
we optimize 
\[ 
2 \sum_i A(\beta^\top x_i) - 2 \beta^\top \sum y_ix_i + 
 q \lambda  \beta^\top D \beta, 
\]
where 
$  D =  \text{diag}( |\tilde \beta_1|^{q-2},\ldots, |\tilde \beta_p|^{q-2})$, 
with $\tilde \beta$ being the value of $\beta$ at the current iteration 
of the algorithm. 
The $\epsilon$-perturbed version of this LQA algorithm is, 
as in the previous subsection, obtained by replacing 
$|\beta_j|^{q-2}$,
the $j$th diagonal element of $D$,
with $|\beta_j|^{q-2}\tfrac{|\beta_j|}{|\beta_j| + \epsilon}$, where
$\epsilon$ is some small positive number. 
This perturbed LQA algorithm 
with $\epsilon=10^{-12}$ 
was compared to the HPP and LLA algorithms
in terms of obtaining $L_{1/2}$-penalized
estimates of logistic regression coefficients. 
For each of the 100 values of $\beta$ and $X$ 
used in the previous simulation study, a vector 
of $n=150$ binary observations were independently simulated 
from the 
logistic regression model
$y_i \sim \text{binary}( e^{\beta_i^\top x_i}/(1+e^{\beta_i^\top x_i}))$,
for $i=1,\ldots, n$. 
As shown in the left panel of Figure \ref{fig:glm_lqp}, 
each iteration of
HPP provides a larger improvement to the objective function 
than an iteration of either LLA or LQA, on average across datasets. 
Each of the three algorithms were also iterated until convergence 
criteria (\ref{eqn:ccrit}) was met. Unlike for the previous three simulation 
studies, the objective functions at convergence were sometimes 
non-trivially different: 
Objective functions differed by as much as 4.3\%, and 
differed by 1\% or more for 18 of the 100 datasets. 
The HPP algorithm attained a lower (better) objective function 
than the LQA algorithm for 88 of the 100 datasets, and 
a lower objective function than LLA for 60 datasets. 
However, 
 even though these algorithms produced solutions
with non-trivial objective function differences, their
estimation accuracies were nearly identical:
Relative
MSE $|| \hat\beta - \beta ||^2/||\beta||^2$ and the 
mean function estimation error
$|| X (\hat\beta - \beta) ||^2/|| X \beta||^2$
were both about 0.78
on average across datasets for all three methods.

While 
the number of iterations to convergence for the LQA 
algorithm was nearly five times (4.65) that of the HPP algorithm 
on average, 
HPP requires $K=4$ Newton-Raphson optimizations per iteration
whereas LQA only requires one.  As shown in the right panel 
of the figure, this results in 
the HPP algorithm being slightly less computationally costly 
than the LQA algorithm, on average. 
In term of comparison to LLA, 
while for linear regression  the computational costs 
per iteration of HPP and LQA were much lower than that of LLA, 
for logistic regression the converse is true:
In terms of elapsed times to convergence, 
HPP and LQA took a total of 27.5 and 30.2 seconds respectively
to obtain estimates 
for all 100 datasets, whereas LLA took only  14.7 seconds -
about twice as fast for this particular simulation scenario. 
This reversal is because all 
three logistic regression algorithms 
involve iterative optimization schemes
within each iteration (IRLS for HPP and LQA, gradient descent for LLA), 
whereas for linear regression this was the case only for LLA.

\begin{figure}
\centerline{\includegraphics[width=6in]{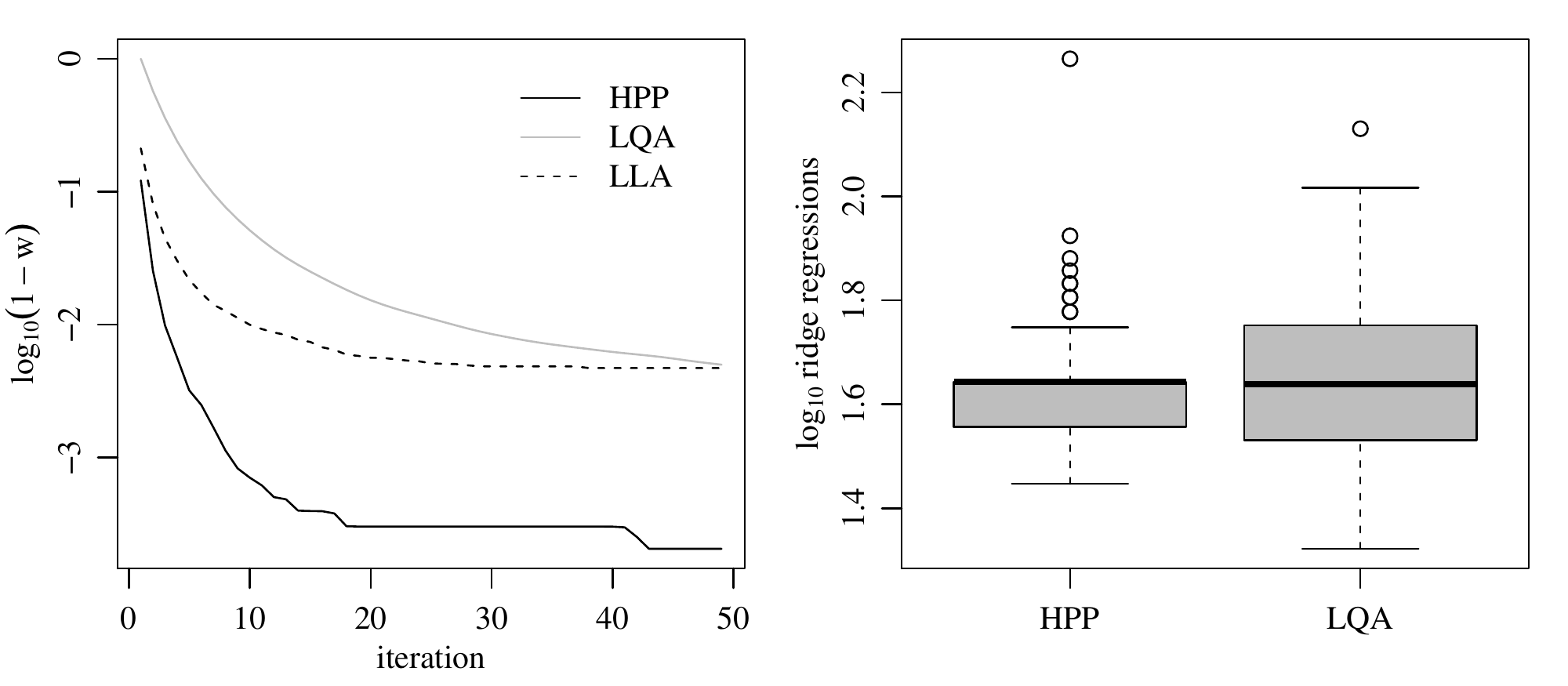}} 
\caption{Simulation results for $L_{1/2}$-penalized logistic regression.
The left panel shows the progress of each algorithm per
iteration, on average across the 100 datasets.
The right panel shows the variability in the number of
ridge regressions 
required until
the convergence criterion is met. }
\label{fig:glm_lqp}
\end{figure}

\section{Structured penalization with the HPP}  
It is well-known that the lasso objective function
$f(\beta)= \beta^T Q \beta - 2 \beta^T l + \lambda||\beta||_1$
is equal to the scaled log posterior density of $\beta$ 
under a Laplace prior distribution on the elements of
$\beta$ \citep{tibshirani_1996, 
figueiredo_2003, 
park_casella_2008}. 
Specifically, for the linear regression model 
$y\sim N(X\beta, \sigma^2 I )$  and prior  distribution   
$\beta_1,\ldots, \beta_p \sim$ i.i.d.\ Laplace$(\lambda/[2\sigma^2])$, 
the posterior density of $\beta$ is given by 
\begin{align*}
p(\beta | y, X,\sigma^2 ) & \propto 
  \exp( -||y-X \beta ||^2 /[2\sigma^2 ] ) 
  \exp( - \lambda ||\beta||_1/[2\sigma^2 ] ) \\
&  \propto 
  \exp( -\tfrac{1}{2\sigma^2} [ \beta^\top Q \beta - 2 \beta^\top l  + \lambda ||\beta||_1 ] )  = \exp( -\tfrac{1}{2\sigma^2} f(\beta) ) , 
\end{align*}
where $Q= X^\top X $ and $l = X^\top y$ as before. 
The lasso estimate $\hat \beta$ is therefore equal to 
the  posterior mode estimate.
Alternatively, reparametrizing
the regression model so that $\beta= u\circ v$, 
and using independent 
$N(0, 2\sigma^2/\lambda)$ prior distributions
 for the elements of $u$ and $v$ gives
\begin{align*}
p(u,v | y, X,\sigma^2 ) & \propto 
  \exp( -||y-X (u\circ v) ||^2 /[2\sigma^2 ] ) 
  \exp( - \tfrac{1}{2\sigma^2} [  \lambda u^\top u/2 + \lambda v^\top v/2 ]) \\
&  \propto 
  \exp( -\tfrac{1}{2\sigma^2}  g(u,v) ), 
\end{align*}
where $g(u,v)= (u\circ v)^T Q (u\circ v) - 2 (u\circ v)^T l + 
  \lambda( u^\top u+ v\top v)/2$. 
The minimizers $\hat u$ and $\hat v$ of $g(u,v)$ may therefore 
be viewed as posterior mode estimates under independent 
Gaussian priors on $u$ and $v$, with the lasso estimate 
given by $\hat \beta = \hat u \circ \hat v$. 

The $L_1$ and $L_2$ penalties
(or Laplace and Gaussian priors)  
on $\beta$ and $(u,v)$
respectively induce sparsity in the 
parameter estimates, but in an unstructured way. 
From a Bayesian perspective, the
\emph{a priori}
independence of the parameters 
means that the 
parameters convey 
no information about each other. In particular, 
the shrinkage of any one parameter is 
unrelated to that of any other. 
However, 
in many estimation problems there are relationships among the elements of 
$\beta$, and 
it may be desirable to shrink related parameters 
by similar  amounts or towards a common value. 
For example, a subset of elements 
of $\beta$ may correspond to the effects of different 
levels of a single categorical predictor. 
For models with such variables, 
the group lasso penalty of \citet{yuan_lin_2006}
may
shrink the entire 
subset to zero, 
in which case it would be inferred that  there is 
no effect of the categorical predictor. 
In other situations, the elements of $\beta$ may represent variables 
that have  spatial or temporal locations. 
To estimate such parameters, 
\citet{tibshirani_saunders_rosset_zhu_knight_2005} introduced the 
fused lasso, which in addition to penalizing 
the magnitudes of the elements of $\beta$, also penalizes 
the differences between elements that are spatially or 
temporally close to one another. 

These and other structured penalizations employ a variety 
of optimization techniques to obtain parameter estimates. 
As an alternative to these approaches, the 
HPP can be used to generate a class of structured sparse estimates
that can be obtained with a simple and intuitive
alternating ridge regression algorithm. 
Consider the objective function 
\begin{equation}
g(u,v) = (u\circ v)^T Q (u\circ v) - 2 (u\circ v)^T l +  
         u^\top \Sigma_u^{-1} u + v^\top \Sigma_v^{-1} v , 
\label{eqn:shpp} 
\end{equation}
where $\Sigma_u$ and $\Sigma_v$ are positive definite (covariance)  matrices. 
This objective function is equal to the scaled log-posterior density of 
$( u,v)$ under the model $y\sim N(X\beta ,\sigma^2 I)$ and 
independent prior 
distributions 
 $u/\sigma \sim N(0 , \Sigma )$, 
 $v/\sigma \sim N(0 , \Sigma )$. 
We refer to this combination of parametrization and 
penalty as a structured HPP, or SHPP. 
Note that the unstructured HPP corresponding to the $L_1$ penalty 
on 
$\beta = u\circ v$ is obtained by setting 
$\Sigma = \tfrac{2}{\lambda} I$. 

Local minima of $g(u,v)$ may be obtained 
with the following algorithm 
in which $u$ and $v$ are 
iteratively optimized  until convergence:
\begin{enumerate}
\item Set $u = ( Q \circ  vv^{\top} + \Sigma_u^{-1} )^{-1} (l \circ v)$; 
\item Set $v = ( Q \circ  uu^\top + \Sigma_v^{-1}  )^{-1} (l \circ u)$.
\end{enumerate}
In addition to the simplicity of the estimation approach, 
the SHPP also benefits from having 
a penalty that is easily interpretable as a 
covariance model for the relationships among the 
entries of $u$ and $v$,  and therefore among the entries of 
$\beta=u\circ v$.  
From a Bayesian perspective, 
if $u/\sigma\sim N(0 ,\Sigma_u)$ and 
   $v/\sigma\sim N(0 ,\Sigma_v)$, then 
  $ \Cov{ \beta/\sigma^2 }  = \Sigma_u \circ \Sigma_v$ 
(although $\beta$ is not \emph{a priori} Gaussian). 
Furthermore, 
since any positive definite matrix can be written 
as the Hadamard product of two other positive definite 
matrices \citep{majindar_1963,styan_1973}, 
$\Sigma_u$ and $\Sigma_v$ can always be chosen to 
yield a particular value of 
  $ \Cov{ \beta/\sigma^2 }$.

To illustrate the SHPP methodology, we analyze 
spatial data from a diffusion tensor imaging (DTI) study  
that compared the brain activity of 6 dyslexic children 
to that of  6 non-dyslexic controls, as described 
in 
\citet{deutsch_et_al_2005}. 
Following 
\citet{efron_2010}, we analyze  $z$-scores obtained 
from two-sample tests performed  at each of $p=15,443$ spatially 
arranged voxels. Each voxel has a location in a $73\times 55 \times 20$
three-dimensional grid. 
A plot of these $z$-scores at three adjacent vertical locations is 
given in Figure \ref{fig:brainraw}. 
\begin{figure}
\centerline{\includegraphics[width=6in]{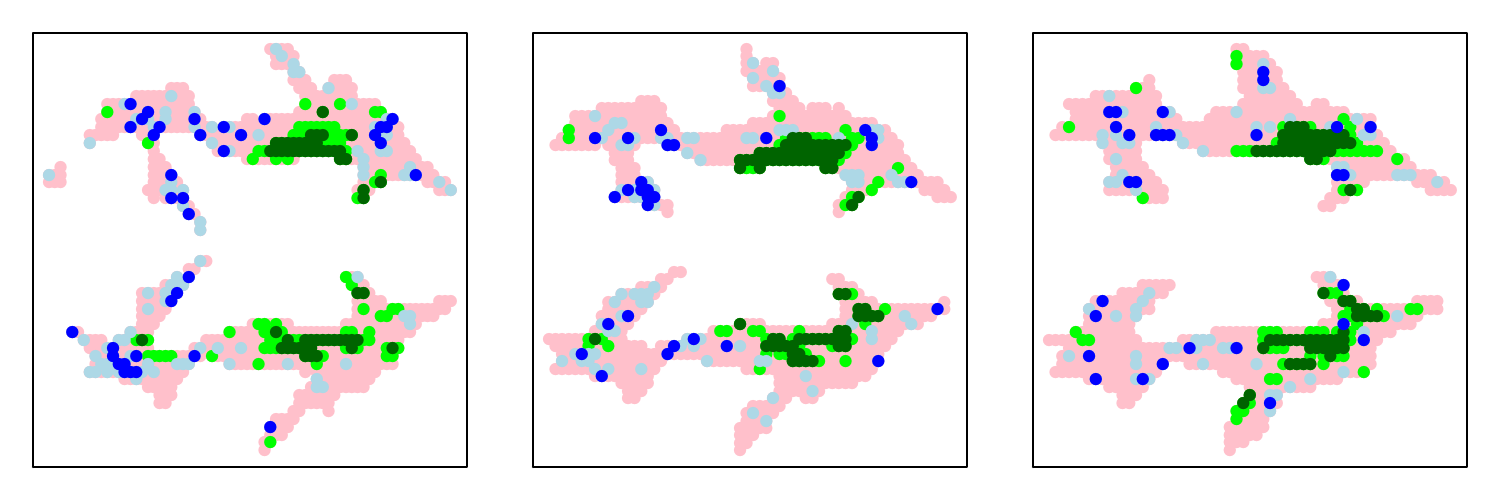}}
\caption{DTI $z$-scores at three 
adjacent vertical coordinates. 
Green and blue 
correspond to positive and negative values 
exceeding $z_{.975}$ in absolute value, 
and light green and light blue correspond to 
 values exceeding $z_{.9}$ in absolute value. }
\label{fig:brainraw}
\end{figure} 
The data exhibit a high degree of positive spatial dependence, 
with the 
$z$-values of 
neighboring voxels frequently being of the same sign.    
Also note that the spatial dependence among the large 
negative values appears to be much smaller than that of the 
large positive values, suggesting that many of the 
large negative values are due to noise.  

Possible causes of this dependence include 
spatially dependent measurement 
errors and spatially structured signals. 
While it is likely that both of these factors are 
contributing to the spatial dependence, 
for the purpose of illustrating the SHPP we 
assume that it is exclusively the latter. We model
these data as $z \sim N_p( \theta , I)$, and use 
a spatial SHPP for estimating $\theta$. 
Specifically, we write $\theta= u\circ v$ and optimize 
\begin{equation}
 g(u,v) = || z - u \circ v ||^2 +  u^\top \Sigma^{-1} u  + 
   v^\top \Sigma^{-1} v,  
\label{eqn:gbrain}
\end{equation}
where $\Sigma$ is the covariance matrix of a spatial conditional 
autoregression (CAR) model. The CAR model is parametrized 
as $\Sigma =  \tau^2 ( I - \rho G)^{-1}$, where 
$\rho$ and $\tau^2$ are parameters and 
$G$ is a matrix of spatial weights. 
Letting $n_i$ be the number of voxels spatially adjacent to 
voxel $i$, 
the weights are given 
by $g_{i,j} = 1/n_i$ if $j$ is adjacent to $i$ and $g_{i,j}=0$ 
otherwise. 
Under this model, the conditional expectation of
$u_i$ given $\{u_j:j\neq i\}$  is $\rho$ times the average value 
of its neighbors, and the conditional variance is $\tau^2$. 

Empirical Bayes estimates of $\rho$ and $\tau^2$ were obtained
from the data and then 
used to define the  SHPP objective 
function (\ref{eqn:gbrain}).  
To avoid 
calculations involving the $15,443\times 15,433$ covariance matrix 
$\Sigma$ that are
 necessary for the algorithm described above, we instead  
use a block coordinate descent algorithm that iteratively 
updates the 
values of $u_i$ and $v_i$ for each voxel $i$ as follows:
\begin{enumerate}
\item Compute $q=v_i^2 +1/\tau^2$ and $l = z_i v_i + \rho \bar u_{n_i}/\tau^2$. 
      Set $u_i = l/q$; 
\item Compute $q=u_i^2 +1/\tau^2$ and $l = z_i u_i + \rho \bar v_{n_i}/\tau^2$.
      Set $v_i = l/q$. 
\end{enumerate}
In the above algorithm, $\bar u_{n_i}$ denotes the 
average of the $u$-values among the $n_i$ neighbors of voxel $i$, 
and $\bar v_{n_i}$ is defined analogously. 
Starting with values of 
$u_i=|z_i|$ and 
$v_i = z_i/|z_i|$, this algorithm was iterated until the 
relative change in $u\circ v$  from one complete iteration 
over all voxels to the next 
was less than 
$10^{-10}$. 
This required 123 iterations and a 
little under two minutes on
my desktop computer. 
The resulting estimate
$\hat \theta$  is  very sparse, 
with about 94\% of the entries being less than 
$10^{-6}$ in absolute value (10 times smaller than the 
smallest  entry of $z$). 
As shown in 
the top row of Figure \ref{fig:brainshrink}, this sparsity is 
highly spatially structured, and 
a few large multi-voxel regions of the brain
with consistently positive  values
are identified. 
The lack of non-zero  negative values of $\hat \theta$ is 
in agreement with an analysis by \citet[chapter 4]{efron_2010}
using false discovery rates, and further suggests that 
the few large and negative raw-data values at spatially isolated 
voxels are the result of noise. 
For comparison, the second row  of the figure
summarizes 
an unstructured lasso estimate, 
where the shrinkage parameter $\lambda$ was obtained
using the same empirical Bayes approach described in 
Section 2.2.  
The unstructured estimate  
has non-zero negative values for 
several spatially isolated voxels, 
and the estimated positive regions 
are not as spatially coherent as those of 
the SHPP estimate.

\begin{figure}
\centerline{\includegraphics[width=6in]{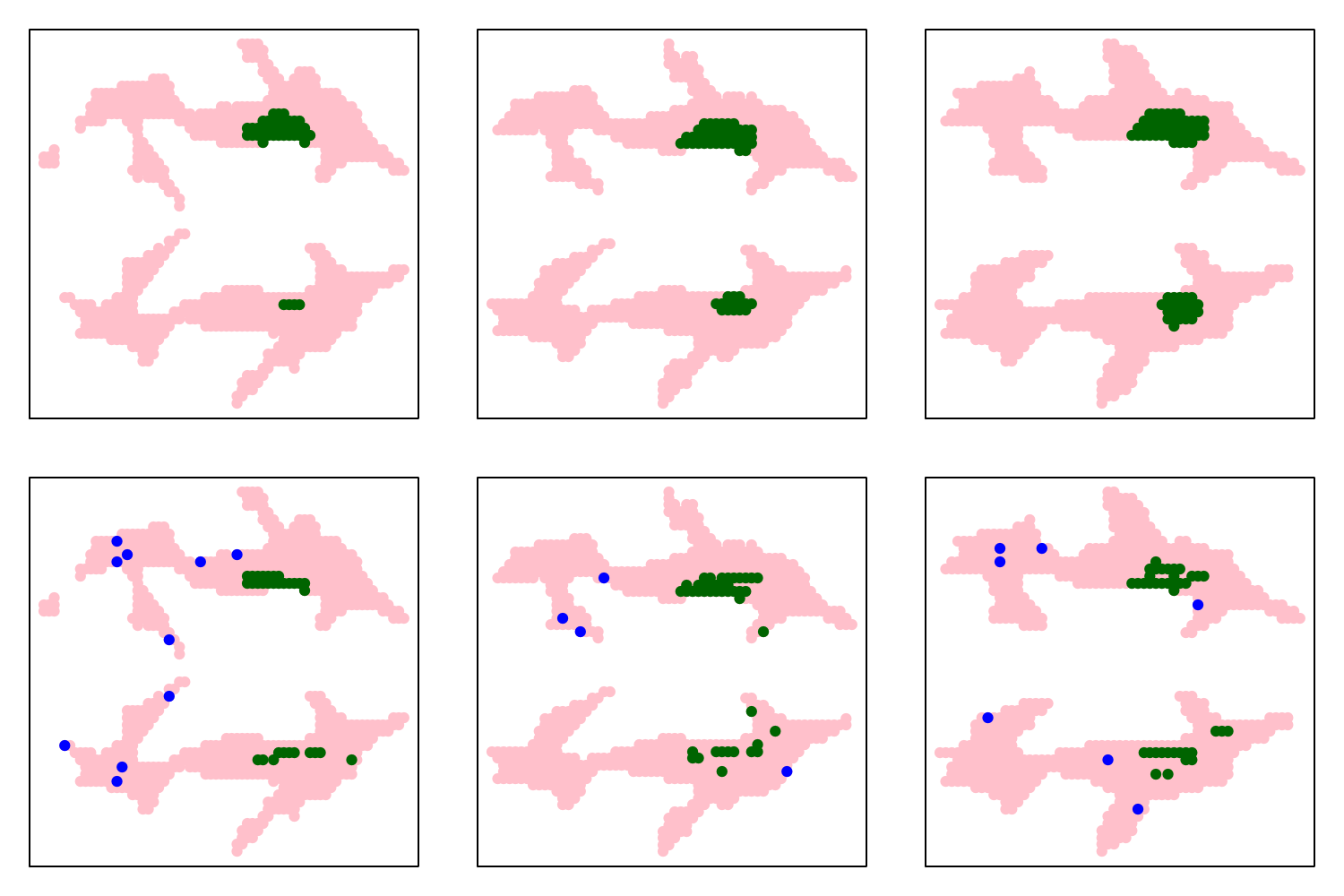}} 
\caption{
Estimated group differences using 
spatially structured (top) and unstructured (bottom) 
penalties. Green, blue and pink 
correspond to estimates that are  positive, negative and zero 
respectively. }
\label{fig:brainshrink}
\end{figure}

\section{Discussion} 
The Hadamard product parametrization provides 
a simple and intuitive method for obtaining 
$L_q$-penalized regression estimates 
for certain values of $q$. 
In terms of accessibility to practitioners, the 
HPP algorithm is similar to the LQA algorithm. 
Both of these algorithms proceed by iterative ridge regression. 
Unlike the ``ridge'' of the LQA algorithm, 
that of the HPP algorithm is 
bounded near zero, suggesting that HPP is to 
be preferred over LQA for reasons of numerical stability.
However, for the numerical examples in this article 
(and others considered by the author), instability was not 
an insurmountable issue for LQA and the two algorithms performed 
comparably. 

The HPP algorithm updates only non-sparse parameter values, 
meaning that its computational costs are greatly reduced 
when the parameter values are highly sparse. 
Sparsity in parameter values can be introduced by combining 
HPP with a cyclic coordinate descent (CCD) algorithm. In a 
simulation study with a large number of predictors ($p=1000$), 
the resulting 
hybrid algorithm exhibited extremely fast convergence 
relative to standard CCD. 
However, in non-sparse high-dimensional scenarios, HPP as presented 
here may be impractical as it requires a Cholesky factorization 
at each iteration. One possible modification of HPP for such cases 
would be to use a first-order method (e.g.\ gradient descent) 
to do the optimization at each iteration.

The $L_2$ penalties on the parameters in the HPP 
can be thought of as isotropic normal prior distributions. 
Similarly, non-isotropic quadratic penalties can be constructed 
that can be interpreted as Gaussian models for the parameters. 
Such penalties can be useful in situations where the 
relationships between the parameters are naturally 
expressed in terms of a covariance model. 
However, the analogy to Bayesian estimation is limited: 
The value $(\hat u,\hat v)$ that minimizes
$||y - X (u\circ v)||^2+  \lambda( ||u||^2  + ||v||^2 )/2$   is the posterior
mode of $(u,v)$ under isotropic 
Gaussian priors, but $\hat\beta = \hat u\circ \hat v$ is 
not the posterior mode of $\beta$ under the prior 
induced by the Gaussian priors on $u$ and $v$. 
This is because, in general, the 
posterior mode of a function of a parameter is not 
the function at the posterior mode of the parameter. 
If $u$ and $v$ are \emph{a priori} Gaussian
then the induced prior distribution on $\beta$ is not 
a Laplace distribution (which would yield the Lasso 
estimate as a posterior mode), but a 
``normal product'' distribution  \citep{weisstein_2016_npd}. 
This prior, considered by \citet{zhou_liu_fang_2015}, 
corresponds to a different penalty than any of the 
$L_q$ penalties, and can be shown to be 
in the 
class of normal-gamma prior distributions studied
by \citet{griffin_brown_2010}. 

\medskip

Replication code for the numerical examples in this article 
is available at my  website. I thank 
Panos Toulis for a helpful discussion.
This research was supported by NSF grant DMS-1505136.


\bibliography{hpp}

\begin{thebibliography}{}

\bibitem[\protect\citeauthoryear{Deutsch, Dougherty, Bammer, Siok, Gabrieli,
  and Wandell}{Deutsch et~al.}{2005}]{deutsch_et_al_2005}
Deutsch, G.~K., R.~F. Dougherty, R.~Bammer, W.~T. Siok, J.~Gabrieli, and
  B.~Wandell (2005).
\newblock Correlations between white matter microstructure and reading
  performance in children.
\newblock {\em Cortex\/}~{\em 41\/}(3), 354--363.

\bibitem[\protect\citeauthoryear{Efron}{Efron}{2010}]{efron_2010}
Efron, B. (2010).
\newblock {\em Large-scale inference}, Volume~1 of {\em Institute of
  Mathematical Statistics (IMS) Monographs}.
\newblock Cambridge University Press, Cambridge.
\newblock Empirical Bayes methods for estimation, testing, and prediction.

\bibitem[\protect\citeauthoryear{Fan and Li}{Fan and Li}{2001}]{fan_li_2001}
Fan, J. and R.~Li (2001).
\newblock Variable selection via nonconcave penalized likelihood and its oracle
  properties.
\newblock {\em J. Amer. Statist. Assoc.\/}~{\em 96\/}(456), 1348--1360.

\bibitem[\protect\citeauthoryear{Figueiredo}{Figueiredo}{2003}]{figueiredo_2003}
Figueiredo, M. A.~T. (2003, September).
\newblock Adaptive sparseness for supervised learning.
\newblock {\em IEEE Trans. Pattern Anal. Mach. Intell.\/}~{\em 25\/}(9),
  1150--1159.

\bibitem[\protect\citeauthoryear{Friedman, Hastie, and Tibshirani}{Friedman
  et~al.}{2010}]{friedman_hastie_tibshirani_2010}
Friedman, J., T.~Hastie, and R.~Tibshirani (2010).
\newblock Regularization paths for generalized linear models via coordinate
  descent.
\newblock {\em Journal of Statistical Software\/}~{\em 33\/}(1), 1--22.

\bibitem[\protect\citeauthoryear{Fu}{Fu}{1998}]{fu_1998}
Fu, W.~J. (1998).
\newblock Penalized regressions: the bridge versus the lasso.
\newblock {\em J. Comput. Graph. Statist.\/}~{\em 7\/}(3), 397--416.

\bibitem[\protect\citeauthoryear{Goeman, Meijer, and Chaturvedi}{Goeman
  et~al.}{2016}]{goeman_meijer_chaturvedi_2016}
Goeman, J.~J., R.~J. Meijer, and N.~Chaturvedi (2016).
\newblock {\em Penalized: L1 (lasso and fused lasso) and L2 (ridge) penalized
  estimation in GLMs and in the Cox model}.
\newblock R package version 0.9-47.

\bibitem[\protect\citeauthoryear{Griffin and Brown}{Griffin and
  Brown}{2010}]{griffin_brown_2010}
Griffin, J.~E. and P.~J. Brown (2010).
\newblock Inference with normal-gamma prior distributions in regression
  problems.
\newblock {\em Bayesian Anal.\/}~{\em 5\/}(1), 171--188.

\bibitem[\protect\citeauthoryear{Hunter and Li}{Hunter and
  Li}{2005}]{hunter_li_2005}
Hunter, D.~R. and R.~Li (2005).
\newblock Variable selection using {MM} algorithms.
\newblock {\em Ann. Statist.\/}~{\em 33\/}(4), 1617--1642.

\bibitem[\protect\citeauthoryear{Kab{\'a}n}{Kab{\'a}n}{2013}]{kaban_2013}
Kab{\'a}n, A. (2013).
\newblock Fractional norm regularization: Learning with very few relevant
  features.
\newblock {\em IEEE transactions on neural networks and learning
  systems\/}~{\em 24\/}(6), 953--963.

\bibitem[\protect\citeauthoryear{Kab{\'a}n and Durrant}{Kab{\'a}n and
  Durrant}{2008}]{kaban_durrant_2008}
Kab{\'a}n, A. and R.~J. Durrant (2008).
\newblock Learning with ${L}_{q<1}$ vs ${L}_1$-norm regularisation with
  exponentially many irrelevant features.
\newblock In {\em Joint European Conference on Machine Learning and Knowledge
  Discovery in Databases}, pp.\  580--596. Springer.

\bibitem[\protect\citeauthoryear{Luenberger and Ye}{Luenberger and
  Ye}{2008}]{luenberger_ye_2008}
Luenberger, D.~G. and Y.~Ye (2008).
\newblock {\em Linear and nonlinear programming\/} (Third ed.).
\newblock International Series in Operations Research \& Management Science,
  116. Springer, New York.

\bibitem[\protect\citeauthoryear{Majindar}{Majindar}{1963}]{majindar_1963}
Majindar, K.~N. (1963).
\newblock On a factorisation of positive definite matrices.
\newblock {\em Canad. Math. Bull.\/}~{\em 6}, 405--407.

\bibitem[\protect\citeauthoryear{Park and Casella}{Park and
  Casella}{2008}]{park_casella_2008}
Park, T. and G.~Casella (2008).
\newblock The {B}ayesian lasso.
\newblock {\em J. Amer. Statist. Assoc.\/}~{\em 103\/}(482), 681--686.

\bibitem[\protect\citeauthoryear{Schmidt, Fung, and Rosales}{Schmidt
  et~al.}{2007}]{schmidt_fung_rosales_2007}
Schmidt, M., G.~Fung, and R.~Rosales (2007).
\newblock Fast optimization methods for l1 regularization: A comparative study
  and two new approaches.
\newblock In {\em European Conference on Machine Learning}, pp.\  286--297.
  Springer.

\bibitem[\protect\citeauthoryear{Styan}{Styan}{1973}]{styan_1973}
Styan, G. P.~H. (1973).
\newblock Hadamard products and multivariate statistical analysis.
\newblock {\em Linear Algebra and Appl.\/}~{\em 6}, 217--240.

\bibitem[\protect\citeauthoryear{Tibshirani}{Tibshirani}{1996}]{tibshirani_1996}
Tibshirani, R. (1996).
\newblock Regression shrinkage and selection via the lasso.
\newblock {\em J. Roy. Statist. Soc. Ser. B\/}~{\em 58\/}(1), 267--288.

\bibitem[\protect\citeauthoryear{Tibshirani, Saunders, Rosset, Zhu, and
  Knight}{Tibshirani et~al.}{2005}]{tibshirani_saunders_rosset_zhu_knight_2005}
Tibshirani, R., M.~Saunders, S.~Rosset, J.~Zhu, and K.~Knight (2005).
\newblock Sparsity and smoothness via the fused lasso.
\newblock {\em J. R. Stat. Soc. Ser. B Stat. Methodol.\/}~{\em 67\/}(1),
  91--108.

\bibitem[\protect\citeauthoryear{Tibshirani}{Tibshirani}{2013}]{tibshirani_2013}
Tibshirani, R.~J. (2013).
\newblock The lasso problem and uniqueness.
\newblock {\em Electron. J. Stat.\/}~{\em 7}, 1456--1490.

\bibitem[\protect\citeauthoryear{Weisstein}{Weisstein}{2016}]{weisstein_2016_npd}
Weisstein, E.~W. (2016).
\newblock Normal product distribution. {From MathWorld--A Wolfram Web
  Resource}.
\newblock Visited on 10/25/16.

\bibitem[\protect\citeauthoryear{Yuan and Lin}{Yuan and
  Lin}{2006}]{yuan_lin_2006}
Yuan, M. and Y.~Lin (2006).
\newblock Model selection and estimation in regression with grouped variables.
\newblock {\em J. R. Stat. Soc. Ser. B Stat. Methodol.\/}~{\em 68\/}(1),
  49--67.

\bibitem[\protect\citeauthoryear{Zhou, Liu, and Fang}{Zhou
  et~al.}{2015}]{zhou_liu_fang_2015}
Zhou, Z., K.~Liu, and J.~Fang (2015).
\newblock Bayesian compressive sensing using normal product priors.
\newblock {\em IEEE Signal Processing Letters\/}~{\em 22\/}(5), 583--587.

\bibitem[\protect\citeauthoryear{Zou and Li}{Zou and Li}{2008}]{zou_li_2008}
Zou, H. and R.~Li (2008).
\newblock One-step sparse estimates in nonconcave penalized likelihood models.
\newblock {\em Ann. Statist.\/}~{\em 36\/}(4), 1509--1533.

\end{thebibliography}

\end{document}